\newtheorem{theorem}{Theorem}
\newtheorem{definition}{Definition}
\newtheorem{lemma}{Lemma}
\newtheorem{proposition}{Proposition}
\newtheorem*{theorem*}{Theorem}
\newcommand{\SSS}{\mathcal{S}}
\newcommand{\NL}{\mathcal{N}_\mathcal{L}}
\newcommand{\NQ}{\mathcal{N}_\mathcal{Q}}
\newcommand{\NB}{\mathcal{N}_\mathcal{B}}
\newcommand{\fR}{\mathfrak{R}}
\newcommand{\beq}{\begin{equation}}
\newcommand{\eeq}{\end{equation}}
\newcommand{\beqa}{\begin{eqnarray}}
\newcommand{\eeqa}{\end{eqnarray}}
\newcommand{\bra}[1]{\ensuremath{\left\langle#1\right|}}
\newcommand{\ket}[1]{\ensuremath{\left|#1\right\rangle}}
\newcommand{\mean}[1]{\mathds{E}\left[#1\right]}
\newcommand{\dd}{\mathrm{d}}
\newcommand{\norm}[1]{\left\lVert#1\right\rVert}
\renewcommand{\today}{\number\day\space\ifcase\month\or
   January\or February\or March\or April\or May\or June\or
   July\or August\or September\or October\or November\or December\fi
   \space\number\year}
\begin{document}

\title{Limits on correlations in networks for quantum and no-signaling resources}

\date{\today}

\author{Marc-Olivier Renou}
\affiliation{D\'epartement de Physique Appliqu\'ee, Universit\'e de Gen\`eve, CH-1211 Gen\`eve, Switzerland}

\author{Yuyi Wang}
\affiliation{Distributed Computing Group, ETH Zurich, Switzerland} 

\author{Sadra Boreiri}
\affiliation{D\'epartement de Physique Appliqu\'ee, Universit\'e de Gen\`eve, CH-1211 Gen\`eve, Switzerland}

\author{Salman Beigi}
\affiliation{School of Mathematics, Institute for Research in Fundamental Sciences (IPM), Tehran,
Iran}

\author{Nicolas Gisin}
\affiliation{D\'epartement de Physique Appliqu\'ee, Universit\'e de Gen\`eve, CH-1211 Gen\`eve, Switzerland}

\author{Nicolas Brunner}
\affiliation{D\'epartement de Physique Appliqu\'ee, Universit\'e de Gen\`eve, CH-1211 Gen\`eve, Switzerland}

\begin{abstract}
A quantum network consists of independent sources distributing entangled states to distant nodes which can then perform entangled measurements, thus establishing correlations across the entire network. But how strong can these correlations be? Here we address this question, by deriving bounds on possible quantum correlations in a given network. These bounds are nonlinear inequalities that depend only on the topology of the network. We discuss in detail the notably challenging case of the triangle network. Moreover, we conjecture that our bounds hold in general no-signaling theories. In particular, we prove that our inequalities for the triangle network hold when the sources are arbitrary no-signaling boxes which can be wired together. Finally, we discuss an application of our results for the device-independent characterization of the topology of a quantum network.
\end{abstract}
\maketitle

\section{Introduction}Quantum nonlocality, i.e., the fact that distant observers performing local measurements on a shared entangled quantum state can violate a Bell inequality, is a key feature of quantum theory \cite{bell}. In recent years, considerable efforts have been devoted, both theoretically and experimentally, to deepen our understanding of this phenomenon \cite{review}. Of particular interest is the investigation of quantum nonlocality in the context of general networks \cite{branciard,branciard2,fritz}. Here, a set of distant observers share entanglement distributed by several sources which are assumed to be independent from each other. As each source distributes entanglement to only certain subsets of observers, new limits on possible correlations arise. Moreover, observers can correlate particles coming from different independent sources (e.g. via entangled quantum measurements, as in quantum teleportation \cite{bennett}), and thus generate strong correlations across the entire network. Notably, this leads to astonishing new effects, such as the possibility of violating a Bell inequality without the need for inputs \cite{fritz,branciard2}. Beyond the fundamental interest, these ideas are also directly relevant to the development of real-world quantum networks \cite{kimble,simon}.

It is fair to say, however, that our understanding of quantum nonlocality in networks is still very limited \cite{NG2018}. A first challenge is to characterize classical correlations in networks, i.e. when all sources distribute only classical variables. Due to the independence condition of the sources, the set of classical correlations is no longer convex (contrary to the standard Bell scenario, featuring a single common source, see e.g. \cite{review}). Therefore, relevant Bell inequalities must be nonlinear. Examples of such inequalities have been derived (see e.g. \cite{branciard,branciard2,chaves2,armin,rosset,chaves,wolfe,luo}), but the general structure of this problem is still not understood.

Another challenge, which represents the starting point of this work, is to understand the limits of quantum correlations in networks. Specifically, given a certain network, we aim at determining fundamental constraints on achievable correlations when using quantum resources. Hence we consider any possible quantum strategy compatible with the network topology. This involves sources producing arbitrary quantum states (of any Hilbert space dimension), and nodes performing arbitrary joint quantum measurements. Interestingly it turns out that fundamental limits arise here even without involving any inputs, in contrast to the standard Bell scenario. This means that the network topology imposes fundamental limitations on achievable correlations.

We start our investigation with the case of networks featuring three observers, each of them providing an output (but receiving no input). We discuss in detail the notably challenging case of the ``triangle network'' \cite{branciard2,fritz,gisin,wolfe}, for which we identify a nonlinear inequality capturing partly the set of quantum correlations. More generally we derive a family of inequalities satisfied by quantum correlations considering an arbitrary network with bipartite sources. Interestingly these inequalities can be viewed as quantum versions of the Finner inequalities \cite{finner}, introduced in a completely different context, namely graph theory.

An interesting application of our inequalities is that they allow one to test the topology of an unknown quantum network in a device-independent manner. That is, by simply considering the observed correlations, one can tell whether a certain network topology is compatible or not. If the observed data violates one of our inequalities, then the corresponding network topology can be ruled out immediately.

Finally, we go beyond quantum correlations, and consider more general no-signaling resources \cite{PR,barrettPR}.
That is, each source now distributes a no-signaling (NS) box, and each node performs a joint operations on these resources \cite{barrett}. Notably, we show that, for the triangle network, the nonlinear inequality we obtained for quantum correlations also holds in a general no-signaling theory, where each source can produce an arbitrary number of bipartite NS boxes, and each node performs an arbitrary wiring on the received resources. This leads us to the conjecture that Finner inequalities captures in fact the limit of correlations in networks for any possible no-signaling theory. It thus represents a general limit of achievable correlations in networks, independently of the underlying physical model, given the latter does not allow for instantaneous communication (in other words, is compatible with special relativity).

\section{Three-observer networks}
\begin{figure}
\includegraphics[scale=0.2]{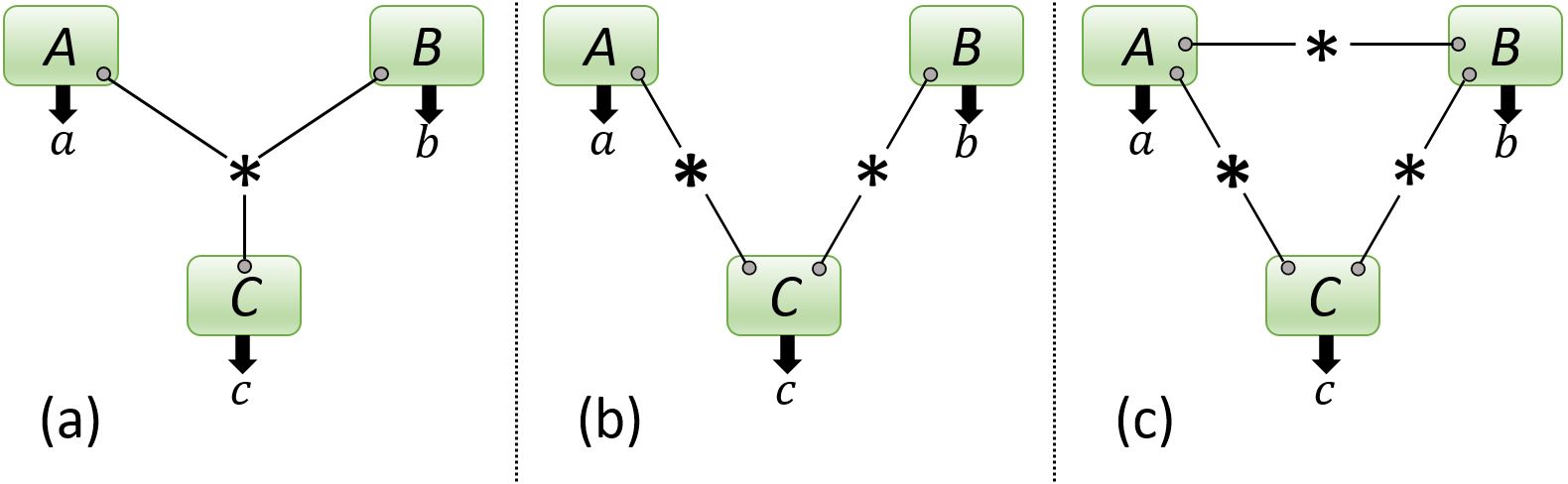}
\caption{All inequivalent three-party networks.}
\label{threeparties}
\end{figure}
Consider a quantum network with three observers $A$, $B$ and $C$, featuring one (or more) sources, distributing quantum states to subset of the parties. Each party then performs a measurement on the received quantum systems, leading to outputs denoted $a$, $b$ and $c$. 
Here for simplicity we assume that $a, b, c\in \{0,1\}$ are binary, but later consider larger output sets.

One can consider three inequivalent networks here. The first, depicted in Fig.~\ref{threeparties}(a), features a single common source distributing a quantum state to all three observers. This corresponds to the situation considered in the standard Bell scenario (see e.g. \cite{review}), except that parties receive no inputs in our case. It is straightforward to see that any possible distribution $P(abc)$ can be achieved. In fact, it is enough to restrict to classical sources here. The source samples from the distribution $P(abc)$, and then distributes the obtained outputs to each observer. Geometrically, the set of possible attainable distributions $P(abc)$  is nothing but the whole probability simplex, which is a 7-dimensional simplex in $\mathbb R^8$ due to the normalization constraint $\sum_{a,b,c} P(abc) =1$.

A more interesting scenario is when the network features two independent sources, as in Fig.~\ref{threeparties}(b). The first source distributes a common state to $A$ and $B$, and the second independent source to $B$ and $C$. This scenario is known as bilocality \cite{branciard}, and corresponds to the setup of entanglement swapping. In this case, the parties $A$ and $C$ are initially independent, and can only be correlated via $B$. Hence, if one traces out $B$, the marginal statistics of $A$ and $C$ must factorize. We have the causality condition:
\begin{equation} \label{biloc}
\sum_b P(abc) = P_{AC}(ac) = P_A(a) P_C(c) \,.
\end{equation}
Hence, contrary to the first network discussed above, not all correlations are possible in the bilocality network. It turns out however that the constraint \eqref{biloc} is enough to characterize achievable correlations: any $P(abc)$ satisfying \eqref{biloc} can be achieved. 
It is again enough to consider only classical variables. 
Specifically, let the first (resp. second) source sample from $P_A(a)$ (resp. $P_C(c)$) and distribute the output to $A$ and $B$ (resp. $B$ and $C$) and $B$ use local randomness to sample $P(b|ac)$. 
Geometrically, the set of achievable distributions $P(abc)$ forms a 6-dimensional curved manifold in $\mathbb{R}^8$.

Next we move to the third---and arguably the most interesting and challenging---configuration, i.e. the triangle network (see Fig.~\ref{threeparties}(c)).
Consider the bilocality network again, and add a source connecting $A$ and $C$. Due to this additional source, the independence condition \eqref{biloc} does no longer hold.
In fact, one can show that, besides the normalization constraint, there is no other equality constraint for this network (which would reduce the dimension of the set).
This follows from the fact that the maximally mixed (uniform) distribution $P_u(abc) = 1/8$ $\forall a,b,c$ is surrounded by a ball of achievable distributions; see Appendix~\ref{cube_representation}.

It turns out, however, that not all distributions $P(abc)$ are achievable in the triangle scenario, as shown in Appendix~\ref{cube_representation} and Ref.~\cite{wolfe} via specific examples. Thus, the set of possible distributions forms a strict subset of the probability simplex, yet its characterization is a challenging problem. Here we derive a relevant nonlinear inequality that necessarily holds in quantum theory. 

\begin{theorem}\label{Finner_triangle-1}
In the triangle network (Fig.~\ref{threeparties}(c)), quantum correlations necessarily satisfy
\begin{equation}\label{Finner_triangle}
P(abc) \leq \sqrt{P_A(a)P_B(b)P_C(c)} \,.
\end{equation}
\end{theorem}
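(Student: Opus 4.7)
My plan is to reduce the quantum inequality to a classical Finner-type inequality via a doubling trick. By purifying each source and absorbing the purifying system into one of the two parties it connects, and by Naimark-dilating each POVM to a rank-one projective measurement on an ancilla-extended Hilbert space (with the ancilla fixed in a pure state that remains local and therefore preserves the triangle topology), I may reduce to the case where each source is a pure state $|\psi_s\rangle$ and each POVM element has rank one, $M_a = |\phi_a\rangle\langle\phi_a|$ with $\|\phi_a\|\leq 1$. In this setting the joint state is $|\Psi\rangle = |\psi_\alpha\rangle|\psi_\beta\rangle|\psi_\gamma\rangle$, each source admits a Schmidt decomposition $|\psi_\alpha\rangle = \sum_i \sqrt{\lambda_i^\alpha}|i\rangle|i\rangle$ (and analogously for $\beta, \gamma$), and $P(abc) = |A|^2$ with
\begin{equation}
A = \sum_{i,j,k} \sqrt{\lambda_i^\alpha \lambda_j^\beta \lambda_k^\gamma}\,\phi_a(i,k)\phi_b(i,j)\phi_c(j,k),
\end{equation}
where $\phi_a(i,k) = \langle ik|\phi_a\rangle$; the marginals read $P_A(a) = \sum_{i,k}\lambda_i^\alpha\lambda_k^\gamma|\phi_a(i,k)|^2$ and analogously for $B, C$.

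The key step is to expand $|A|^2 = A\overline{A}$ as a sum over doubled Schmidt indices $(i,i'), (j,j'), (k,k')$. Grouping $I = (i,i')$, $J = (j,j')$, $K = (k,k')$, introducing positive weights $\mu_\alpha(I) = \sqrt{\lambda_i^\alpha\lambda_{i'}^\alpha}$ (and analogously $\mu_\beta, \mu_\gamma$), and the factorized functions $\Phi_a(I,K) = \phi_a(i,k)\overline{\phi_a(i',k')}$ (and analogously $\Phi_b, \Phi_c$), I rewrite
\begin{equation}
|A|^2 = \sum_{I,J,K} \mu_\alpha(I)\mu_\beta(J)\mu_\gamma(K)\,\Phi_a(I,K)\Phi_b(I,J)\Phi_c(J,K),
\end{equation}
which is exactly a classical Finner triangle contraction with the (non-probability) measures $\mu_\alpha, \mu_\beta, \mu_\gamma$. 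Applying the classical Finner inequality (whose two-step Cauchy--Schwarz proof works verbatim for arbitrary non-negative measures) gives
\begin{equation}
|A|^2 \leq \|\Phi_a\|_{L^2(\mu_\alpha\mu_\gamma)}\|\Phi_b\|_{L^2(\mu_\alpha\mu_\beta)}\|\Phi_c\|_{L^2(\mu_\beta\mu_\gamma)}.
\end{equation}

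Because $\Phi_a$ factorizes as $\phi_a\otimes\overline{\phi_a}$, the weighted $L^2$-norm is a perfect square,
\begin{equation}
\|\Phi_a\|_{L^2(\mu_\alpha\mu_\gamma)}^2 = \Bigl(\sum_{i,k}\sqrt{\lambda_i^\alpha\lambda_k^\gamma}\,|\phi_a(i,k)|^2\Bigr)^2,
\end{equation}
and a final Cauchy--Schwarz that splits $\sqrt{\lambda_i^\alpha\lambda_k^\gamma}\cdot 1$ against $|\phi_a|\cdot|\phi_a|$ bounds the inner sum by $\sqrt{P_A(a)}\cdot\|\phi_a\|\leq\sqrt{P_A(a)}$, using $\|\phi_a\|\leq 1$ together with the definition of $P_A(a)$. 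Combining the three analogous estimates for $A, B, C$ then yields $P(abc) = |A|^2 \leq \sqrt{P_A(a)P_B(b)P_C(c)}$, as claimed. The main obstacle I anticipate is essentially bookkeeping: checking that purification and Naimark dilation preserve the triangle topology and the bound $\|\phi_a\|\leq 1$, and that the doubled sum really is a classical Finner contraction with the correctly identified measures and functions; once that identification is in place, the rest consists of the classical Finner inequality together with two elementary Cauchy--Schwarz estimates.
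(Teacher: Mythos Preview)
Your overall architecture---expand the joint probability over doubled Schmidt indices and recognise the result as a classical Finner (Loomis--Whitney) contraction---is exactly the route the paper takes in Appendix~\ref{finner_quantum}. The gap is in the reduction you invoke at the very start: Naimark dilation produces a \emph{projective} measurement, but the projector $\Pi_a$ associated with a given outcome need not have rank one. Indeed, if one appends a local ancilla in a pure state $|0\rangle$ and asks for a rank-one $\tilde M_a=|\phi_a\rangle\langle\phi_a|$ reproducing the statistics, then $(I\otimes\langle 0|)\tilde M_a(I\otimes|0\rangle)$ is again rank one, forcing the original $M_a$ to be rank one---so no such dilation exists in general. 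One can of course refine the projective measurement to an orthonormal-basis measurement, but then the physical outcome $a$ corresponds to a \emph{set} of fine-grained labels, and summing your bound $|A_{\ell m n}|^2\le \sqrt{P_{A,\ell}P_{B,m}P_{C,n}}$ over that set gives $\big(\sum_\ell\sqrt{P_{A,\ell}}\big)\big(\sum_m\sqrt{P_{B,m}}\big)\big(\sum_n\sqrt{P_{C,n}}\big)$, which is in the wrong direction relative to $\sqrt{P_A(a)P_B(b)P_C(c)}$.

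The fix is to drop the rank-one hypothesis and run your doubling argument with $\Psi_a(I,K)=\langle i'k'|M_a|ik\rangle$ in place of your factorised $\Phi_a$. The classical Finner step goes through unchanged; what changes is the final norm bound, which now reads $\|\Psi_a\|_{L^2(\mu_\alpha\mu_\gamma)}^2=\mathrm{Tr}\big[\sigma_A^{1/2}M_a\sigma_A^{1/2}M_a\big]$ with $\sigma_A$ the diagonal reduced state. Bounding this by $\mathrm{Tr}[\sigma_A M_a^2]=\mathrm{Tr}[\sigma_A M_a]=P_A(a)$ is precisely the Araki--Lieb--Thirring inequality $\mathrm{Tr}[(\sigma^{1/4}M\sigma^{1/4})^2]\le \mathrm{Tr}[\sigma^{1/2}M^2\sigma^{1/2}]$, which is the step the paper invokes. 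Your factorisation-plus-Cauchy--Schwarz argument is exactly this inequality in the special case $\mathrm{rank}\,M_a=1$; for higher rank the factorisation is lost and one genuinely needs the operator inequality.
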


As quantum correlations are stronger than classical ones, inequality \eqref{Finner_triangle} also holds for the case where the sources emit classical variables: in this case Theorem~\ref{Finner_triangle-1} can be derived by applying two Cauchy-Schwarz inequalities on $\mathbb E[f_A g_B h_C]$ where $f_A, g_B$ and $h_C$ are the characteristic functions of the sets $\{a\}, \{b\}$ and $\{c\}$ respectively. Back to quantum sources, Theorem~\ref{Finner_triangle-1} can be proven by essentially the same ideas, but as we will later prove a generalization of this theorem, we skip the proof here.

In the classical case, the set of all possible strategies can be understood intuitively in geometrical terms, as a 3-dimensional cube. In this case, the inequality~\eqref{Finner_triangle} follows from the Loomis-Whitney inequality, capturing the fact that the volume of a 3-dimensional object is upper bounded by the product of the areas of the object's projections in three orthogonal directions (see Appendix~\ref{cube_representation}). 

The inequality \eqref{Finner_triangle} allows us to prove that a large range of distributions cannot be achieved in the quantum triangle network. Consider for instance the family of distributions

\begin{equation}\label{pq}
P_{p,q} = p \delta_{000} + q \delta_{111} + (1-p-q) P_{\text{diff}}
\end{equation}
where $\delta_{abc}$ represents the distribution that always outputs $a$, $b$ and $c$ deterministically, and $P_{\text{diff}}$ is the uniform distribution over $\{0,1\}^3\setminus\{000, 111\}$, i.e., $P_{\text{diff}}= (\delta_{001} + \delta_{010}+ \delta_{100}+ \delta_{011}+ \delta_{101}+ \delta_{110})/6$. From inequality \eqref{Finner_triangle} it follows that $P_{p,q} $ is not realizable in the triangle network when $q> 1+p -2p^{2/3}$; see Fig.~\ref{pq_region}. This also shows that the set of quantum distributions achievable in the triangle network is not star convex \footnote{Note that here we show that the set is not star convex with respect to the identity, which implies that the set is not star convex in general (invoking symmetry arguments and the fact that if a set is star convex with respect to two points, then it is star convex with respect to any points between those two).} since distributions of the form $r \delta_{111}+ (1-r) P_u$ violate inequality~\eqref{Finner_triangle} when $7/8 <r<1$; $P_u$ denotes the uniform distribution over $\{0,1\}^3$.

The above example also illustrates how our results can be used to test the topology of an initially unknown network. Suppose Alice, Bob and Charlie observe a distribution $P(abc)$ that violates inequality \eqref{Finner_triangle}. Then, they can certify that the underlying network is not of the triangle type (neither bilocal indeed), but must feature a common source distributing information to all three parties. Note that this test is device-independent, as it is based only on the observed data $P(abc)$. 

\begin{figure}
\includegraphics[scale=0.23]{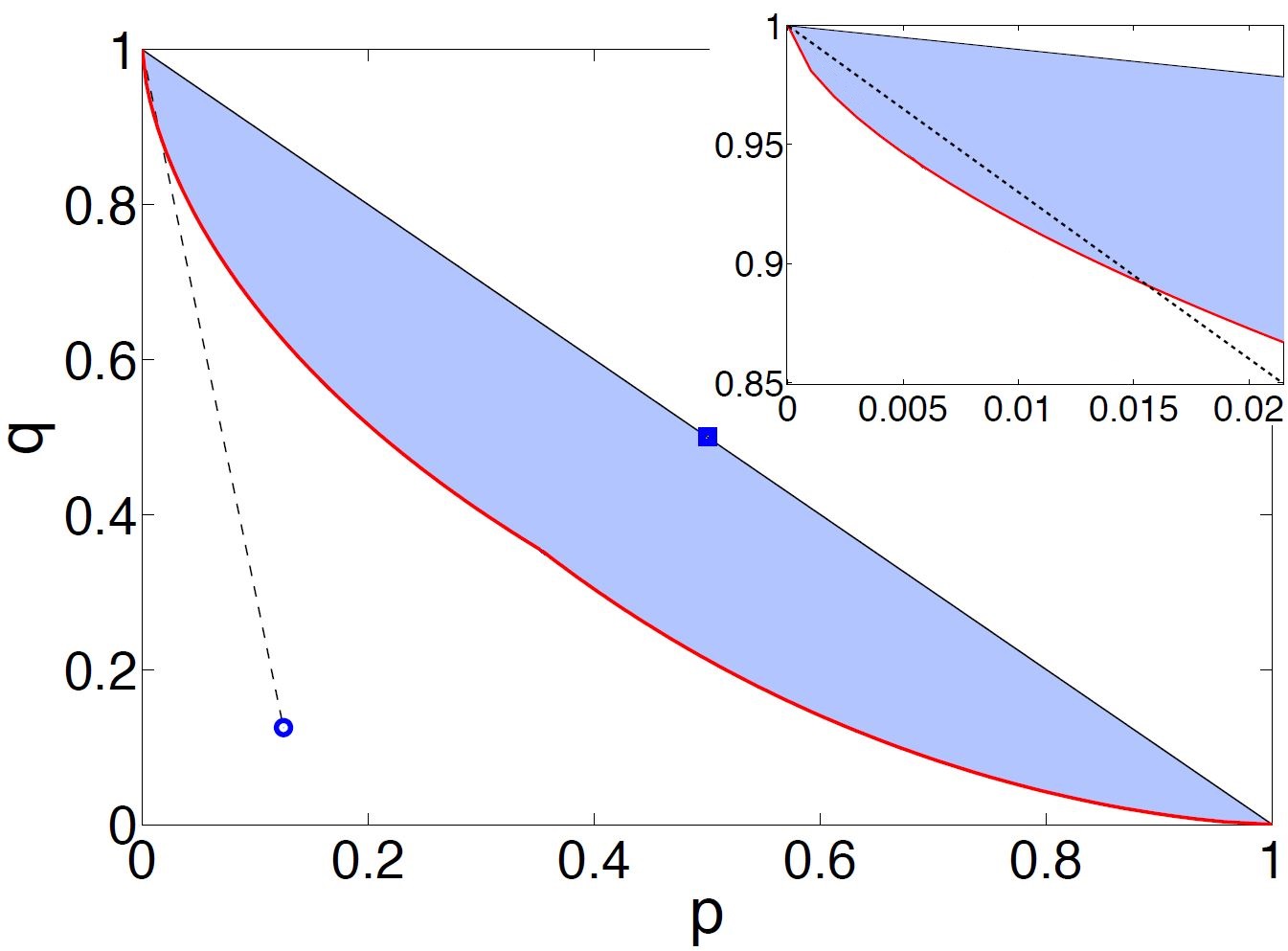}
\caption{Geometrical representation of the set of distributions $P_{p,q}$ of Eq. \eqref{pq}. Distributions in the shaded area are not achievable in the triangle network in quantum theory, as they do not satisfy inequality \eqref{Finner_triangle}. Distributions below the red curve satisfy the inequality, and are thus potentially achievable in quantum mechanics. The blue circle represents the maximally mixed distribution $P_u$, while the blue square is the so-called GHZ distribution $p=q=1/2$. Distributions of the form $r \delta_{111}+ (1-r) P_u$ (dashed line) violate inequality \eqref{Finner_triangle} for $7/8<r<1$ (see inset), showing that the quantum set is not star convex.} 
\label{pq_region}
\end{figure}

In the remainder of the paper, we will generalize Theorem~\ref{Finner_triangle-1} in two different directions. First, we will show how to derive similar nonlinear inequalities for a larger class of networks. Second, we will prove that inequality \eqref{Finner_triangle} holds also in the triangle network for a generalized probabilistic theory.

\section{General networks}
We now consider networks with an arbitrary number of parties (outputs) yet we mostly restrict to bipartite sources. That is, we assume that our networks $\mathcal N$ consist of $n$ parties $A_1, \dots, A_n$, and an arbitrary number of sources each of which is connected to a pair of parties. Thus a network $\mathcal N$ can be thought of as a graph over $n$ vertices whose edges represent sources (e.g., the triangle network is represented by the triangle graph).   
The following theorem presents a generalization of Theorem~\ref{Finner_triangle-1} for arbitrary graphs whose proof is given in Appendix~\ref{finner_quantum}.

\begin{theorem}\label{thm:quantum-finner} (Quantum Finner inequality) Consider a network $\mathcal{N}$ with $n$ observers $A_1, \dots, A_n$ and some bipartite sources. Let $\eta=\{\eta_1, ..., \eta_n\}$ be a fractional independent set of $\mathcal{N}$, i.e. weights $\eta_j$ attributed to $A_j$'s are such that, for each source the sum of weights of parties connected to it is smaller than or equal to 1.  \footnote{Note that here we do not use the classical definition of (fractional) independent sets in hypergraph theory.}%in a hypergaph is defined as a subset of vertices such that no hyperedge is contained in this subset}
Let $f_{j}$ be any real positive local post-processing (function) of the classical output of party $A_{j}$. 
Then, any distribution $P$ achievable in quantum network $\mathcal N$ satisfies
\begin{equation}\label{NQFinner}
\mean{\prod_j f_j} \leq \prod_j \norm{f_j}_{1/\eta_j},
\end{equation}
where $\norm{f}_{1/\eta}=(\mean{ f^{1/\eta}})^{\eta}$ and the expectations are with respect to $P$.
In particular, letting $f_{j}$ being the indicator function of the output of $A_j$ being $a_j$, we have
\begin{equation}\label{NQFinnerProba}
P(a_1...a_n)\leq \prod_{j=1}^n \left(P_{A_{j}}(a_j)\right)^{\eta_j}.
\end{equation}
\end{theorem}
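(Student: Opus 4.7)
The plan is to (i) canonicalize the quantum strategy so that all source states are pure and all measurements are projective, (ii) establish a bipartite Hölder-type inequality on a pure shared state, and (iii) chain this lemma along the edges of the network by a short induction.

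\emph{Step 1 (canonicalization).} First I would purify each source $\rho_e$ to a bipartite pure state $\ket{\phi_e}$, absorbing the purifying ancilla into one of the two endpoints of $e$, and Naimark-dilate each POVM $\{M_j^{a_j}\}$ to a projective measurement $\{P_j^{a_j}\}$ on an enlarged local Hilbert space. Extending $f_j$ trivially to the new ancillas leaves both the joint distribution and all marginals intact, so WLOG $\ket{\Psi}=\bigotimes_e\ket{\phi_e}$ and every measurement is projective. Setting $F_j:=\sum_{a_j}f_j(a_j)P_j^{a_j}\geq 0$, mutual orthogonality of the projections yields $F_j^{1/\eta_j}=\sum_{a_j}f_j(a_j)^{1/\eta_j}P_j^{a_j}$, hence $\norm{f_j}_{1/\eta_j}^{1/\eta_j}=\Tr{\rho_{A_j}F_j^{1/\eta_j}}$ with $\rho_{A_j}=\Tr{_{\bar j}\ket{\Psi}\bra{\Psi}}$, and the claim reduces to
\begin{equation}
\bra{\Psi}\bigotimes_j F_j\ket{\Psi}\leq\prod_j\bigl(\Tr{\rho_{A_j}F_j^{1/\eta_j}}\bigr)^{\eta_j}.
\end{equation}

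\emph{Step 2 (bipartite core lemma).} Next I would prove the two-party version: for any pure $\ket{\phi}_{AB}$, PSD operators $X_A,Y_B$, and weights $\alpha+\beta\leq 1$,
\begin{equation}
\bra{\phi}X_A\otimes Y_B\ket{\phi}\leq\bigl(\Tr{\rho_A X_A^{1/\alpha}}\bigr)^{\alpha}\bigl(\Tr{\rho_B Y_B^{1/\beta}}\bigr)^{\beta}.
\end{equation}
The case $\alpha=\beta=1/2$ is pure-state Cauchy--Schwarz applied to $\bra{\phi}X_A\otimes Y_B\ket{\phi}=\langle(X_A\otimes I)\phi,\,(I\otimes Y_B)\phi\rangle$. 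For general $\alpha+\beta=1$ one exploits the ricochet identity for pure bipartite states to transport $Y_B$ onto system $A$ as a positive operator $\tilde Y_A$ with $\Tr{\rho_A\tilde Y_A^p}=\Tr{\rho_B Y_B^p}$ for all $p\geq 0$, rewrites the left-hand side as $\Tr{\rho_A^{1/2}X_A\rho_A^{1/2}\tilde Y_A}$, and invokes weighted matrix Hölder. The sub-critical regime $\alpha+\beta<1$ then follows by monotonicity of the $L^p(\rho)$ norms.

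\emph{Step 3 (chaining over edges).} Finally I would induct on the number of sources. The base case is the lemma itself. For the inductive step, I pick an edge $e=(i,j)$, treat $\ket{\phi_e}$ as the relevant bipartite pure state, and apply the lemma with exponents $(\eta_i,\eta_j)$, admissible precisely because $\{\eta_j\}$ is a fractional independent set. This peels $\ket{\phi_e}$ off the global state and replaces it by an $\eta_i$-weighted factor localized on $A_i$'s $e$-subsystem and an $\eta_j$-weighted factor on $A_j$'s $e$-subsystem, leaving a strictly smaller product of sources. The main obstacle is the bookkeeping at high-degree vertices: $F_j$ does not split as a tensor product across the several subsystems $A_j$ receives from its incident edges, so the pieces left over on $A_j$ after peeling its edges one by one do not visibly recombine into the single factor $\norm{f_j}_{1/\eta_j}^{\eta_j}$. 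Handling this cleanly requires a vertex-level reassembly that uses orthogonality of $\{P_j^{a_j}\}$ together with a secondary classical Hölder step to collapse the per-edge contributions at $A_j$ into the correct exponent $\eta_j$; once this is verified the induction closes, and specialising to indicator functions recovers the probability form~\eqref{NQFinnerProba}.
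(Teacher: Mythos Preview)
Your Step~3 is where the argument breaks. When you isolate an edge $e=(i,j)$, the operators $F_i$ and $F_j$ act on \emph{all} registers that parties $i$ and $j$ hold, not merely on the two halves of $\ket{\phi_e}$; consequently $\bra{\Psi}\bigotimes_k F_k\ket{\Psi}$ is never of the form $\bra{\phi_e}X_A\otimes Y_B\ket{\phi_e}$ to which your bipartite lemma applies. You acknowledge this (``the main obstacle'') and propose a ``vertex-level reassembly'' using projector orthogonality together with a secondary classical H\"older step, but that is a restatement of the difficulty rather than a resolution. In the commutative proof of Finner one integrates out a source, applies H\"older, and feeds the resulting scalar functions $\big(\mathbb E_{s_e}[f_j^{1/\eta_j}]\big)^{\eta_j}$ back into the induction; this works because conditional expectation and pointwise powers commute with everything. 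In the operator setting there is no analogue of ``$(\text{partial trace over }e)^{\eta_j}$'' that is simultaneously a positive operator on the remaining registers and composes correctly under further partial traces, so the peeling does not iterate. Steps~1 and~2 are fine---the bipartite lemma for general $\alpha+\beta\le 1$ does hold, e.g.\ by writing $\rho^{1/2}=\rho^{\alpha/2}\rho^{\beta/2}$, applying Schatten--H\"older, and then Araki--Lieb--Thirring---but they do not combine into a proof.

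The paper avoids the induction entirely. It first uses convexity of the hypercontractivity ribbon to reduce to extreme points of the fractional-independent-set polytope; since the sources are bipartite (so $\mathcal N$ is a graph), these extreme points are half-integral, and after stripping coordinates equal to $0$ or $1$ one may assume $\eta_j=1/2$ for every $j$. Then each bipartite source is written in Schmidt form, the pair of Schmidt indices is reinterpreted as \emph{classical} shared randomness, the classical Finner inequality (all exponents equal to $2$) is applied to scalar functions $g_j$ built from the matrix elements of $X_j$, and a single Araki--Lieb--Thirring bound $\Tr{(\sigma_j^{1/4}X_j\sigma_j^{1/4})^2}\le\Tr{\sigma_j X_j^2}=\|f_j\|_2^2$ converts back. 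The reduction to $\eta_j=1/2$ is precisely what makes this last step land on the correct power of $\sigma_j$; it is the missing idea in your outline.
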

Inequality \eqref{NQFinner} has been derived by Finner~\cite{finner} in the context of graph theory, as a generalization of H\"older's inequality. In our setting, the proof of Finner directly applies to arbitrary networks with multipartite classical sources. Our theorem here generalizes Finner's inequality for arbitrary networks with bipartite sources that are quantum. 

Note that although Finner's inequality is presented as a continuous family of inequalities depending on the choice of weights $(\eta_1, \dots, \eta_n)$, it can be reduced to a finite set of inequalities for a given network (see Appendix~\ref{app:HR}). In particular, for the triangle network, the only nontrivial fractional independent set corresponds to $ \boldsymbol\eta =(1/2, 1/2, 1/2)$, in which case \eqref{NQFinnerProba} reduces to~\eqref{Finner_triangle}.

\section{Triangle network with no-signaling boxes}
We now consider the correlations achievable in the triangle network in a generalized no-signaling theory~\cite{barrett}. In this model, the resources are not quantum states, but general NS boxes. In the triangle network, each source can thus distribute some NS boxes to the two parties connected to it. We emphasis that
in general, each source can distribute several NS boxes that are not necessarily identical and can have arbitrary number of inputs and outputs.
These NS boxes thus serve as a resource for the parties to generate correlated outputs; each party having an arbitrary number of NS boxes shared with others, can locally ``wire'' these boxes in the most general way to determine an output\footnote{Note that this model corresponds essentially to the generalized probabilistic theory of ``boxworld'' \cite{barrett}, except for the fact that boxworld allows for certain multipartite effects (i.e., measurements) that are not wirings \cite{short_barrett}.}. That is, the inputs of certain boxes can be chosen by the party, while others can be determined by wirings, the output of one box being used as in the input for another one. We can prove the following result. 

\begin{theorem}\label{Finner_boxworld}
Suppose that a distribution $P$ is achievable in the triangle network when the sources distribute arbitrary NS boxes and the parties perform arbitrary local wirings. Then $P$ satisfies~\eqref{Finner_triangle}.
\end{theorem}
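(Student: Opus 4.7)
The plan is to adapt the two--Cauchy--Schwarz / Loomis--Whitney proof of Theorem~\ref{Finner_triangle-1} to the no-signaling (NS) setting. The target is to build, for any NS wiring strategy in the triangle, three independent classical random variables $\lambda_{AB},\lambda_{BC},\lambda_{CA}$ (one per source) such that the outputs $a,b,c$ are deterministic functions of $(\lambda_{AB},\lambda_{CA})$, $(\lambda_{AB},\lambda_{BC})$, $(\lambda_{BC},\lambda_{CA})$ respectively, with joint distribution matching $P(abc)$. Once such a classical triangle representation is available, the two applications of Cauchy--Schwarz sketched after Theorem~\ref{Finner_triangle-1} deliver $P(abc)\le\sqrt{P_A(a)P_B(b)P_C(c)}$ immediately.

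First I would put the strategy in canonical form. By convexity I may restrict to deterministic wirings (a mixed wiring is a convex combination of deterministic ones, and the classical Finner argument applied term-by-term yields the global bound via Jensen). By merging all boxes from the same source into a single NS box with product input/output alphabets, I may assume each source emits exactly one NS box with finite alphabets. After these reductions, party $A$'s output $a$ is a deterministic function of the finite input-output history of her two incident boxes $\alpha_{AB}$ and $\alpha_{CA}$, where inputs to one box may depend adaptively on outputs of the other, and similarly for $B$ and $C$.

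Second I would construct the classical embedding. For each source $\alpha_{ij}$, define $\lambda_{ij}$ to be a complete record of the input-output interaction at both sides of that box under the prescribed wirings, together with any ancillary randomness needed to decouple it from the other sources. The three $\lambda_{ij}$ inherit independence from the independence of the sources; and because each party's wiring only touches her two incident boxes, $a$, $b$, $c$ are functions of the appropriate pairs of $\lambda_{ij}$'s. At that point the classical Loomis--Whitney / Finner argument applies verbatim.

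The main obstacle, and the place where no-signaling must be invoked essentially, is the construction of the $\lambda_{ij}$'s: an NS box generically admits no local hidden-variable model, so one cannot pre-sample the full joint input-output table of $\alpha_{AB}$ and have $A$ and $B$ then read off their pieces. The way out, which I expect to be the technical heart of the proof, is to use the facts that (i) the wirings are deterministic and query only finitely many input sequences, and (ii) by no-signaling, $B$'s marginal statistics on $\alpha_{AB}$ depend only on which input $A$ actually feeds in, not on how $A$ arrived at that input via the adaptive wiring. This should allow me to sample $\lambda_{AB}$ from the realized joint input-output distribution on $\alpha_{AB}$ alone, independently of $\lambda_{BC}$ and $\lambda_{CA}$, while still reproducing the full joint statistics $P(abc)$ after the wirings are executed. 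Carefully formalizing this step, so as not to covertly require an LHV model for any individual NS box (which would fail e.g.\ for PR boxes), is the delicate part; if the direct embedding proves too strong, the fallback is to mimic the two Cauchy--Schwarz steps directly at the level of NS-box marginals, using no-signaling to justify each exchange of expectations between sources.
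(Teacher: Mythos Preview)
Your proposal has a genuine gap at precisely the point you yourself flag as ``the delicate part.'' The plan is to build independent classical variables $\lambda_{AB},\lambda_{BC},\lambda_{CA}$ so that $(a,b,c)$ are deterministic functions of the appropriate pairs and the joint law equals $P$. But achieving that is exactly the statement that $P\in\NL$; in other words, your main line of attack would prove $\NB\subseteq\NL$ for the triangle, which is far stronger than the Finner inequality and is not something you actually establish. The concrete obstruction is the consistency problem you allude to: in an adaptive wiring, Alice's input to the box $\alpha_{AB}$ is a function of outputs she has already obtained from $\alpha_{CA}$, and Bob's input to $\alpha_{AB}$ depends on $\alpha_{BC}$. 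If you sample $\lambda_{AB}$ from the \emph{marginal} input--output record of $\alpha_{AB}$ independently of $\lambda_{CA}$ and $\lambda_{BC}$, then the input values stored in $\lambda_{AB}$ will generically disagree with the inputs Alice and Bob would compute from $\lambda_{CA}$ and $\lambda_{BC}$; so the three records cannot be stitched into a consistent run of the protocol, and the joint law is not reproduced. No-signaling tells you that each party's marginal on a box does not depend on the other party's input, but it does not give you a joint sampling of the three box records that is simultaneously independent across sources and consistent with the wirings. Your fallback of ``mimicking the two Cauchy--Schwarz steps directly at the level of NS-box marginals'' is not developed; in the classical proof those steps proceed by conditioning on a hidden variable of one source and integrating over another, and there is no such variable to condition on here. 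A smaller issue: the reduction ``by convexity to deterministic wirings via Jensen'' is not valid as stated, since the right-hand side $\sqrt{P_A(a)P_B(b)P_C(c)}$ also changes under mixing and the inequality is not preserved termwise (e.g.\ $\delta_{000}$ and $\delta_{111}$ each satisfy Finner, but their average, the GHZ distribution, violates it).

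The paper's proof takes an entirely different route that sidesteps the embedding problem. It reformulates the target as $(1/2,1/2,1/2)\in\fR(T,S,R)$ for the full transcripts $T,S,R$ of the three parties, and then uses the \emph{entropic} characterization of the hypercontractivity ribbon: one must show $\chi:=I(U;TSR)-\tfrac12 I(U;T)-\tfrac12 I(U;S)-\tfrac12 I(U;R)\ge 0$ for every auxiliary $U$. Expanding $I(U;TSR)$ by the chain rule along the time-steps of the wiring splits $\chi$ into an ``input'' part and an ``output'' part. The input part is handled because box and input choices are local functions of the current transcript. The output part is regrouped box-by-box, and here no-signaling enters through conditional-independence lemmas such as $I(B_j;T\mid T_j^eA_jS_j^e)=0$ and $I(B_j;T\mid S_j^e)=0$; after these reductions one is left with sums of the form $L_{A\to B}+L_{A\to C}+L_{B\to C}$, and a symmetrization over the order in which the chain rule is applied shows the total is nonnegative. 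This argument never attempts to simulate the NS boxes classically, which is why it succeeds where a direct embedding cannot.
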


Clearly, this result does not follow from Theorem~\ref{Finner_triangle-1}, as here the sources can distribute stronger nonlocal resources than what is possible in quantum theory~\cite{PR}. However, we also point out that Theorem~\ref{Finner_boxworld} does not imply Theorem~\ref{Finner_triangle-1}, as squantum theory allows for joint entangled measurements which cannot be described as wirings and admit no equivalent in general no-signaling theories~\cite{barrett,short1}.

Here we give the proof ingredients while all details are left for Appendix~\ref{finner_boxworld_proof}.
A key point in the proof is the notion of the Hypercontractivity Ribbon (HR) studied in~\cite{beigi} as a monotone measure of non-local correlations. 

\begin{definition}[Hypercontractivity Ribbon]\label{hypercontractivity_ribbon}
The HR $\fR(P)$ of a tripartite distribution $P(abc)$ is the set of non-negative triplets $(\alpha, \beta, \gamma)$ such that for any real functions $f(a)$, $g(b)$ and $h(c)$ of the outputs $a$, $b$, $c$, we have 
\begin{equation}\label{hypercontractivity_ribbon_2}
\mean{fgh}\leq  \mean{|f|^{1/\alpha}}^\alpha \mean{|g|^{1/\beta}}^\beta \mean{|h|^{1/\gamma}}^\gamma \,.
\end{equation}
\end{definition}

By H\"older's inequality, the HR of the maximally mixed distribution $P_u$ (the weakest resource for establishing correlations) is the entire unit cube. Also it is not hard to verify that the HR of the GHZ distribution $(\delta_{000}+\delta_{111})/2$ which is the best possible resource, is the half cube given by the vertices $(0,0,0), (1,0,0), (0,1,0), (0,0,1)$.

The main feature of HR is its monotonicity under local operations. That is, if a tripartite distribution $Q(a'b'c')$ can be obtain by local post-processing of outcomes of another distribution $P(abc)$, then $\fR(P)\subseteq \fR(Q)$. In particular, with the GHZ distribution we can simulate $P_u$, but no the other way around.

Now with the definition of HR in hand, equation~\eqref{Finner_triangle} essentially says that $(1/2,1/2,1/2)\in\fR(P)$.
In Appendix~\ref{app:HR}, we give an alternative characterization of HR in terms of mutual information which allows us to prove Theorem~\ref{Finner_boxworld}.

\section{Discussion}
We have presented fundamental constraints on quantum correlations achievable in networks. The constraints take the form of nonlinear inequalities, that can be viewed as the quantum version of the Finner inequalities. In particular, we have discussed in detail the case of the triangle network, as well as the problem of device-independently testing the topology of an unknown quantum network.

A natural question is indeed whether the quantum Finner inequalities fully capture the set of quantum correlations in networks. This appears not to be the case in general. Indeed, for the triangle network, there exist correlations that are provably not achievable in quantum theory that do not violate our inequality \eqref{Finner_triangle} \footnote{The so-called ``W'' distribution, $P_W =( \delta_{001}+\delta_{010}+\delta_{100})/3$, cannot be done with quantum resources, which can be proven using the inflation technique of~\cite{wolfe}.}. It would be interesting to derive other forms of constraints. A possibility in this direction would be to exploit the ``reverse Holder'' inequality, the quantum version of which can be straightforwardly derived. Whether this new inequality will turn out to be stronger than the ones we presented is not clear.  More generally, one should generalise the quantum Finner inequalities to the case of sources producing multipartite quantum states.

Finally, we also discussed the limits of correlations in networks when considering theories beyond quantum mechanics. In particular, we could show that inequality \eqref{Finner_triangle} also holds for the triangle network with generalized no-signaling resources. More generally, we conjecture that the inequality \eqref{NQFinner} holds for any no-signaling theory, for more general networks. Note that for the triangle network, this does not follow from our results, as each generalized probabilistic theory features its own set of allowed no-signaling correlations and the set of allowed joint measurements; the two sets being dual to each other \cite{barrett,short_barrett}. If our conjecture is correct, this means that the Finner inequalities capture the limits of achievable correlations in a network, that must hold in any no-signaling theory; the Finner inequalities could thus be viewed as a generalisation of the standard no-signaling condition to networks.

%\textcolor{red}{comment on bipartite vs multipartite sources... generalization of quantum Finner for multipartite sources}

\emph{Acknowledgements.---}We thank Denis Rosset, Armin Tavakoli, Amin Gohari and Elie Wolfe for discussions. We acknowledge financial support from the Swiss national science foundation (Starting grant DIAQ, NCCR-QSIT).

\clearpage

\appendix

\section{Alternative description of networks}\label{network_formalism}

%In this section, we recall and precise our formalism. 

%\begin{figure}
%\includegraphics[scale=0.2]{triangle_presentations.JPG}
%\caption{The three equivalent definition of the triangle network as they appear in literature.}\label{triangle_presentations}
%\end{figure}

%\subsection{Network}

%There are three alternate way to describe a Network. It can by formally introduced as hypergraph where the parties are the vertices and the sources are the hyperedges (see Fig.~\ref{triangle_presentations}(a)). An other equivalent representation is the dual hypergraph in which the vertices represent sources and the hyperedges represent parties (as in \cite{friedgut}, see Fig.~\ref{triangle_presentations}(b)). The last convention, adopted in our paper, is to consider a bipartite graph, with the first set of vertices being the sources and the second set of vertices being the parties (as in \cite{branciard}, see Fig.~\ref{triangle_presentations}(c)). All these three descriptions are equivalent.

A network consists of a pair $(\mathcal A, \mathcal S)$ where $\mathcal A=\{A_1, \dots, A_n\}$ is the set of parties and $\mathcal S=\{S_1, \dots, S_m\}$ is the set of sources each of which is  shared among a specific subset of parties. Each party $A_j$ produces an output after a post-processing her available sources.

Here we do not limit the amount of information provided by the sources, but restrict ourselves to distributions in which the output of each party $A_j$ has a finite alphabet set.
We also restrict ourselves to minimal networks, where no source is connected to a subgroup of parties already connected by another source. 
However, as the amount of information provided by the source is not restricted, we can always (for convenience) add additional sources shared by parties already connected by another source: in Appendix~\ref{cube_representation}, we sometimes assume that the parties have there own source of randomness.

Since here we are mostly interested in bipartite sources we may think of the network as a graph.
We may think of $\mathcal N$ as a graph whose vertices are labeled by $A_j$'s and whose edges are labeled by $S_i$'s. That is, for any $i$ there is an edge $e_i$ of the graph that connects the two parties that share the source $S_i$. For later use, we adopt the notation $S_i\rightarrow A_j$ (or simply $i\rightarrow j$) to represent that $A_j$ receives a share from the source $S_i$, i.e., $A_j$ is connected to the edge $e_i$.

%In the bipartite graph formalizm adopted in our letter, a network is given by a bipartite graph $\mathcal{N}=\left\lbrace \mathcal{S},\mathcal{A},\mathcal{E}\right\rbrace$,  where $\mathcal{S}=\left\lbrace S_1, ..., S_m\right\rbrace$ are the sources, $\mathcal{A}=\left\lbrace A_{1}, ..., A_{n}\right\rbrace$  the parties  and $\mathcal{E}=\left\lbrace e_1,...,e_r\right\rbrace$ the edges (where $e_k\in \mathcal{S}\times \mathcal{P}$ connect a source to a party).

\subsection{Classical variable models}\label{formalism_HVM}

In the classical case we assume that the sources $S_i$ share randomness among the parties, and each party $A_j$  applies a function on the receives randomnesses to determine her output. 
As proven in \cite{rosset} in this case, we can assume that each source $S_i$ only takes a finite number of values.
Alternatively, we can suppose that $S_i$ is associated to a uniform random variable $s_i\in [0,1]$ that is discretized by a single step function over $[0,1]$ by the parties.
%and that output functions are step functions defined over a same subdivision of $[0,1]$ into intervals for each source.
%For readability, we keep this second notation.
%Note that step functions are always measurable.
Here we adopt the latter notation. 
Then the party $A_j$ applies a function on the sources she receives, i.e., on $\{s_i: i\rightarrow j\}$, and outputs $a_j$. Let us denote by $r^j_a$ the indicator function that $a_j$ equals $a$, i.e., for a given realization of the sources $\{s_i\}$ we define $r^j_a(\{s_i: i\to j\})$ to be equal to $1$ if the output of $A_j$ given inputs $\{s_i: i\to j\}$ equals $a$, and $0$ otherwise. 
%Outputs of party $A_{j}$ are denoted $a_{j}$ (we intentionally do not specify the output set) and are associated to indicative functions $r^{j}_a$, functions from $\left\lbrace s_i | S_i\rightarrow A_j\right\rbrace$ to $\{0,1\}$: $r^{j}_a (\{s_i\})$ is 1 if $A_{j}$ output $a$ for this realization of the sources, and 0 otherwise.
Then the joint output distribution can be written as
\begin{equation}
P(a_1, \dots, a_n)=\int \Big( \prod_j r^{j}_{a_j} \big(\{s_i: i\to j\}\big)\Big)\prod_i \dd s_i,
\end{equation} 
and the marginals are given by
\begin{equation}
P_{A_{j}}(a_j)=\int r^{j}_{a_j} \big(\{s_i: i\to j\}\big)\prod_{i: i\rightarrow j} \dd s_i .
\end{equation} 
We denote the set of all distributions achievable in the classical variable model for a given network $\mathcal N$ by $\mathcal N_{\mathcal L}$.

\subsection{Quantum models}
In the quantum model we assume that sources $S_i$ share quantum states, and the parties determine their outputs by applying measurements. 
As we have no dimension restriction, we can assume that the shared states are pure and the measurements are projective. 
We denote by $\rho_i$ the quantum state distributed by source $S_i$ and write $\rho=\bigotimes_i\rho_i$. Moreover, we denote the measurement operators of $A_j$ by $\big\{M^{(j)}_{a_j}\big\}$. Then the resulting output distribution equals
$$P(a_1, \dots, a_n) = \text{Tr}\bigg( \rho \cdot \bigotimes_j M^{(j)}_{a_j}       \bigg).$$
We denote the set of all distributions achievable in the quantum model for a given network $\mathcal N$ by $\mathcal N_{\mathcal Q}$.

%Remark that for $j\neq j'$, $i\neq i'$ and any $a_j, a'_j$, $A^{(j)}_{a_j}$ commutes with $A^{(j')}_{a_{j'}}$, $\rho_i$ commutes with $\rho_{i'}$ and $A^{(j)}_{a_j}$ commutes with $\rho_i$ for $S_i\nrightarrow A_j$.

\subsection{Boxworld model}

In this model, each source distributes an arbitrary number of NS boxes to the parties to which it connects.
%In general, these boxes feature not only outputs, but also inputs; both of which can be of arbitrary size. 
These NS boxes serve as a resource for the parties to generate correlated outputs.
Each party can locally ``wire'' her boxes in hand to determine her output.
That is, each party successively choose a box and its input, and receives an output. 
She is free to choose the order in which she uses her boxes and the input of it with some stochastic post-processing of her transcript at that time, i.e., previous choices of boxes, their inputs and their outputs.
The final output of each party is a stochastic post-processing of her final transcript. We will later formalize this definition in a more precise way. 
We denote the set of all distributions achievable in the boxworld model for a given network $\mathcal N$ by $\mathcal N_{\mathcal B}$.

\subsection{Fractional independent sets}

Later we will use the following definition. 

\begin{definition}
A fractional independent set of a network $\mathcal{N}$
is a vector of non-negative $\boldsymbol\eta=(\eta_1, ..., \eta_n)$ which corresponds a weight each party such that for each source the summation of the weights of parties connected to it at most 1.
Formally, $\eta_j\geq 0$ for all $j$ and
\begin{equation}\label{fractionnal_matchings}
\sum_{j:\, i\rightarrow j} \eta_j \leq 1, \qquad \forall i.
\end{equation}
\end{definition}

A vector $\boldsymbol \eta$ satisfying the above conditions is called a fractional independent set since assuming that $\eta_j$'s are either $0$ or $1$, the subset $\{A_j:\, \eta_j=1\}$ forms an independent set of the associated graph, i.e., a subset of vertices to two of which are adjacent.

\section{Basic properties in the triangle scenario}\label{cube_representation}

In the following, we present some basic results about correlations achievable in the triangle network. 
%They are based on the representation of all strategies as a cube.

\subsection{Cube representation of strategies in the triangle scenario}

\begin{figure}
\center
\includegraphics[scale=0.38]{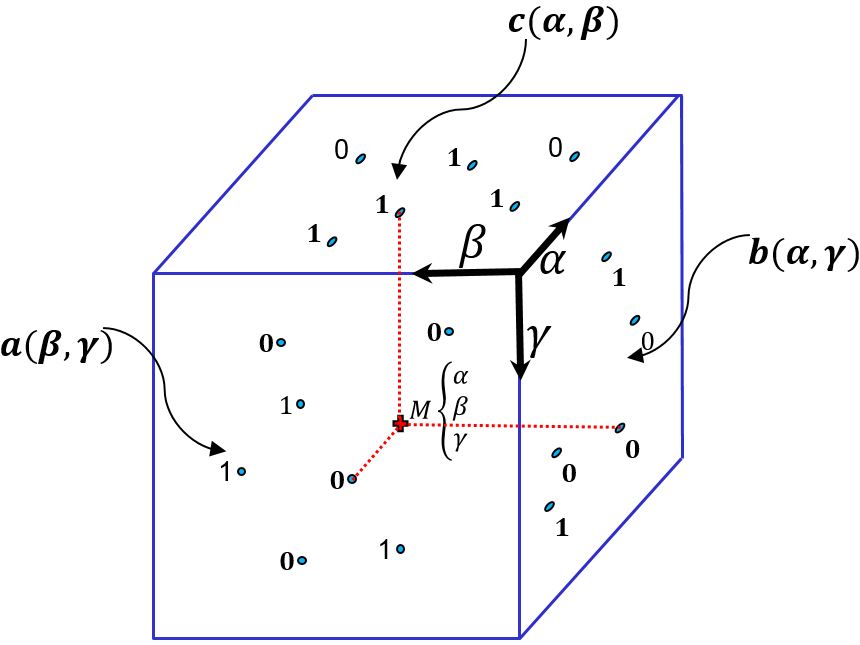}
\caption{Any local strategy in the triangle scenario can be mapped to a unit cube. Three orthogonal axis of the cube are labeled by the hidden variables $\alpha,\beta,\gamma$. Alice's response for a given ($\beta,  \gamma$) is written on the face orthogonal to the $\alpha$ direction (and similarly for Bob and Charlie).
In this representation, $P(000)$ is the volume of points which project to $0$ on all the three faces, and $P_A(0)$ is the area of points one the face orthogonal to the $\alpha$ direction which are labeled $0$.} 
\label{strat_triangle_appendix}
\end{figure} 

Any classical strategy for generating a given tripartite distribution in the triangle scenario can geometrically be represented by a cube with labels on it sides.
Consider a unit cube in three dimensions. We label three mutually orthogonal edges of the cube by the three sources $\alpha, \beta, \gamma$ and label $A$, $B$, $C$ the faces respectively orthogonal to the edges $\alpha, \beta, \gamma$. Recall that we assume that the sources $\alpha, \beta$ and $\gamma$ take values in $[0,1]$. Thus any values of $\beta, \gamma\in [0,1]$ correspond to a point on face $A$, and to an answer of Alice  when she receives $(\beta,\gamma)$ from the sources, i.e., $a(\beta,\gamma)$. That is, points of face $A$ are labeled by $a$'s. On the other hand, as we mentioned before, for each source the interval $[0, 1]$ is divided in a finite number of subintervals, and the parties are ignorant of the exact value of the source, but its subinterval index. Therefore, face $A$ is indeed partitioned in some aligned rectangles which are labeled by $a$'s.
The same applies to faces $B$ and $C$. See Fig.~\ref{strat_triangle_appendix}.
%Then the marginal probability of some output of Alice, say $0$, is the area of 0 on $A$ face, and the joint probability of some output, say $(0,0,0)$, is the volume of point in the cube which orthogonally projects to $0$ on $A, B$ and $C$ faces.
%Moreover, considering a strategy for which each source takes a finite number of values (which is always possible, with at most six different values in the case of two outputs for each parties, see \cite{rosset}), the edges a divided in a finite number of intervals and the strategy only depends on the interval label. Hence the cube can be divided in several aligned cuboid, the strategy being constant in each cuboid.

\subsection{Basic properties of $\NL$ in the triangle scenario}

In the following, we recall and give basic properties of the set of correlation in $\NL$ for the triangle scenario. We choose to illustrate some proofs with the cube representation of strategies discussed above. %We give consequences over the quantum ($\NQ$) and boxworld ($\NB$) set of correlations in the same scenario. 
For simplicity of presentation, we limit ourselves to the case where the outputs are all binary; generalization of these proofs to larger output alphabet sizes is straightforward. 

\begin{proposition}
Let $\mathcal N$ denote the triangle network. Then the followings hold:
\begin{enumerate}[label=(\roman*)] 
\item  The GHZ distribution and the W distribution (in which exactly one of the parties, chosen uniformly at random, outputs $1$ and the others output $0$) are not in $\NL$ (see also~\cite{wolfe}).

\item $\NL$, as well as $\NQ$, $\NB$, are contractible (even though they are not star convex), hence do not contain holes.

\item $\NL$ contains an open ball in the probability simplex around $P_u$ the maximally mixed distribution (see also \cite{evans}). Thus the dimension of $\NL$ equals the dimension of the probability simplex, which is $7$ when outputs are all binary. The same holds for $\NQ$ and $\NB$ as they contain $\NL$.

\item The Finner inequality \eqref{Finner_triangle} is valid for any distribution in $\NL$.
\end{enumerate}
\end{proposition}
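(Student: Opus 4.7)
My plan is to prove (iv) first, since it immediately yields the GHZ case of (i); then handle the W distribution separately by invoking inflation; then establish (ii) by a local-noise homotopy; and finally obtain (iii) either by quoting~\cite{evans} or by an explicit dimension count.

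For (iv), I work in the cube representation. A classical triangle strategy determines outputs $a(\beta,\gamma)$, $b(\alpha,\gamma)$, $c(\alpha,\beta)$ as functions of the three independent uniform sources on $[0,1]$, so $P(abc)=\iiint F(\beta,\gamma)\,G(\alpha,\gamma)\,H(\alpha,\beta)\,d\alpha\,d\beta\,d\gamma$, with $F,G,H$ the indicator functions of the events $a(\beta,\gamma)=a$, $b(\alpha,\gamma)=b$, $c(\alpha,\beta)=c$. A Cauchy--Schwarz in the inner $\alpha$-integral against $G$ and $H$, combined with $G^2=G$ and $H^2=H$, bounds the inner integral by $\sqrt{g(\gamma)\,h(\beta)}$ where $g(\gamma)=\int G\,d\alpha$ and $h(\beta)=\int H\,d\alpha$. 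A second Cauchy--Schwarz in $(\beta,\gamma)$ pairing $F$ against $\sqrt{gh}$ produces $\sqrt{P_A(a)\,P_B(b)\,P_C(c)}$, which is (iv). Applied to GHZ this requires $1/2\le 1/(2\sqrt 2)$, a contradiction, so the GHZ distribution is not in $\NL$.

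The W distribution lies on the ``wrong'' side of Finner ($1/3<\sqrt{(2/3)^2(1/3)}$), so Finner alone is insufficient: this is the main obstacle in (i), which I resolve by citing the inflation argument of~\cite{wolfe} excluding $P_W$ even from $\NQ\supseteq\NL$. For (ii), the key observation is that each party may freely absorb private local randomness, since attaching a private single-party source respects the bipartite-source topology. Let $N_t$ be the single-bit depolarizing channel that outputs the input with probability $1-t$ and a uniform bit with probability $t$, and set $H_t(P)=N_t^{\otimes 3}(P)$. Then $H_0=\mathrm{id}$, $H_1(P)=P_u$ for every $P$, $H_t(P_u)=P_u$ for every $t$, and $H_t$ preserves each of $\NL$, $\NQ$, $\NB$ because local classical post-processings are implementable in all three models. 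Hence $H_t$ is a strong deformation retraction of each set to the single point $P_u$, proving contractibility without relying on star convexity (which in fact fails, as shown in the inset of Fig.~\ref{pq_region}).

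For (iii), the quickest path is to quote Evans~\cite{evans}, who constructs an explicit open ball around $P_u$ inside the classical triangle set. A self-contained route is to exhibit seven classical strategies with distributions $P_1,\dots,P_7$ such that $P_1-P_u,\dots,P_7-P_u$ affinely span the $7$-dimensional tangent space of the probability simplex; then $P_u+\sum_i\epsilon_i(P_i-P_u)\in\NL$ for all sufficiently small $\epsilon_i$. Three directions come from purely local product strategies and three more from bilocal embeddings (one source set to a constant); the seventh must activate all three sources simultaneously, for example via a small XOR-type coupling. The main difficulty is certifying that this last direction is linearly independent from the first six, which is the technical core of~\cite{evans}. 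The inclusions $\NL\subseteq\NQ\subseteq\NB$ then extend the open-ball property to the other two sets.
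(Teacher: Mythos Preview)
Parts (iv), (i), and (ii) are correct. For (iv) your two Cauchy--Schwarz steps are exactly the Loomis--Whitney inequality the paper invokes, so this is the same proof. For (i) the paper instead argues by direct case analysis on the cube representation (Fig.~\ref{NoStratW} for W; the GHZ case is ``left for the reader'' by the same method); your GHZ-via-Finner shortcut is cleaner, and citing~\cite{wolfe} for W is legitimate though not self-contained. For (ii) your homotopy is genuinely different and arguably simpler: the paper acts at the \emph{source} level, confining the original (rescaled) strategy to a $t\times t\times t$ corner sub-cube and filling the rest with the `?' symbol, which yields the explicit interpolation~\eqref{eq:P_t-abc}; you instead act at the \emph{output} level with local depolarizing channels $N_t^{\otimes 3}$. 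Both are jointly continuous in $(t,P)$, interpolate between the identity and $P_u$, and are implementable in each of $\NL,\NQ,\NB$; your version makes the strong-retract property immediate.

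Your ``self-contained route'' for (iii), however, has a real gap. From seven distributions $P_1,\dots,P_7\in\NL$ with $P_i-P_u$ spanning the tangent space of the simplex it does \emph{not} follow that $P_u+\sum_i\epsilon_i(P_i-P_u)\in\NL$ for all small $\epsilon_i$: $\NL$ is not convex (indeed not even star convex), so affine combinations of achievable points need not be achievable, and merely exhibiting seven directions gives no open ball. What is missing is a mechanism for \emph{combining} several strategies into a single one. The paper supplies exactly this: it places disjoint $\epsilon_j\times\epsilon_j\times\epsilon_j$ sub-cubes (with disjoint projections) inside the unit cube, fills each with a deterministic strategy $P^{(xyz)}$ for $(x,y,z)\in\{0,1,?\}^3$ and `?' elsewhere, and then computes that the resulting signed increments $\Gamma_{P^{(xyz)},\epsilon}$ generate, with both signs, a basis of $\{S:\sum_{abc}S(abc)=0\}$. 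This constructive ``multi-sub-cube'' step is precisely what upgrades ``seven tangent directions'' to ``an open ball around $P_u$''. Quoting~\cite{evans}, as you also propose, is of course an acceptable alternative.
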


\begin{proof}
($i$) can be proven based on the cube representation of strategies.
The proof for the W distribution is given in Fig.~\ref{NoStratW}; the proof for GHZ is left for the reader.
\begin{figure}[H]
\begin{center}
\includegraphics[scale=0.22]{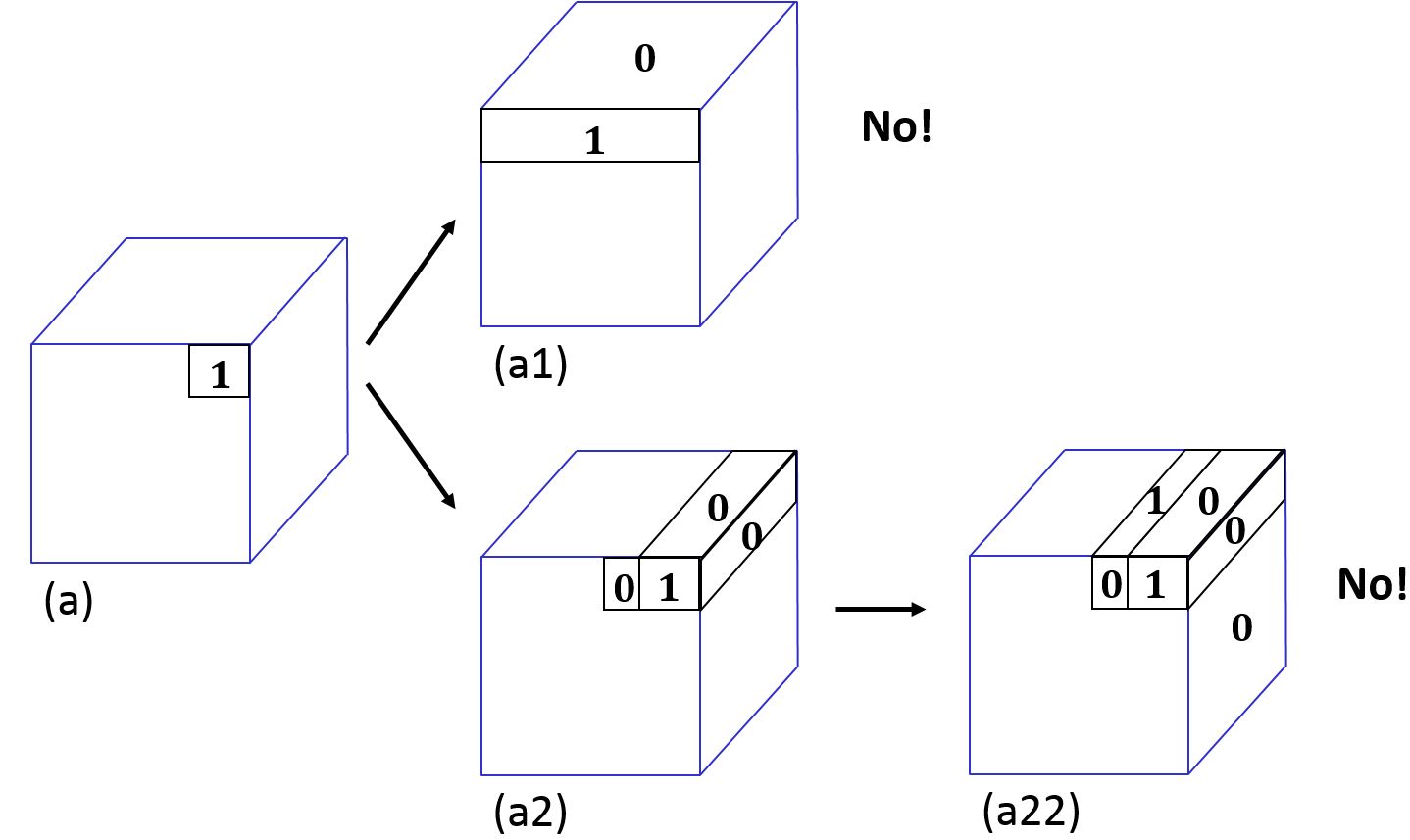}
\caption{Suppose that a cube gives the W distribution. As Alice sometimes answer 1, by relabeling the hidden variables, we can suppose that there is a 1 at the location given in $(a)$. Then, there are two possibilities: either there is no 0 on the left of that 1 (case $(a1)$), or there is one (case $(a2)$). The first case is not possible: as Alice and Charlie never answer 1 together, there must be 0 everywhere on Charlie's face, i.e., Charlie always answers 0.
In the second one, as when Alice says 1 Bob and Charlie cannot say 1, we end up with the cube $(a2)$. As when Alice and Bob both say 0, Charlie must say 1, we obtain the cube $(a22)$. However, in that cube Bob should not say 1 when Charlie already says it; he must answer 0 all the time, which is absurd.}\label{NoStratW}
\end{center}
\end{figure}

($ii$) Recall that a set $\SSS$ is said to be contractible if it can be continuously shrunk to a point within $\SSS$.
More precisely, there is a \emph{continuous} map $\Phi: (t, P)\in [0,1]\times \SSS \mapsto \Phi_t(P)\in\SSS$  such that $\Phi_1(P)=P$ and $\Phi_0(P)=P_u$ for some fixed $P_u$.
%E.g. a star is contractible (we just have to take all points continuously to the center of the star), but a circle (or also a hollow ball) is not, as at least to points really close one to each other in the initial circle will have to follow a different side of the circle.
For the set $\NL$ such a map $\Phi_t(P)$ for an arbitrary $P\in \NL$ is constructed as follows, and can similarly be defined for $\NQ$ and $\NB$. 

In the cubic representation of strategies, since the parties can also have access to local randomness independent of common sources, we may add a question mark symbol `$?$' telling the party to choose her output uniformly at random. Thus, the maximally mixed (uniform) distribution $P_u$ corresponds to a cube all of whose three orthogonal faces are labeled by the question mark. Now consider a strategy for generating a distribution $P$ and construct a cube whose corner $t\times t\times t$ sub-cube, for some $0\leq t\leq 1$, is filled according to the \emph{renormalized} strategy for $P$, and the rest of it is filled by the question mark (see Fig.~\ref{contractile}). Call the resulting distribution $P_t=\Phi_P(t)$. When the outputs are binary, a simple computation verify that 
\begin{align}\label{eq:P_t-abc}
P_t(abc) =& \frac{1}{8}\Big(1-3t^2+2t^3 + 8 t^3 P(abc) \nonumber\\
&\qquad + 2t^2(1-t)  \big(P_A(a)+P_B(b)+P_C(c)\big)   \Big).
\end{align}
Clearly, $P_t=\Phi_P(t)$ is continuous in $(P,t)$.

\begin{figure}[H]
\begin{center}
\includegraphics[scale=0.6]{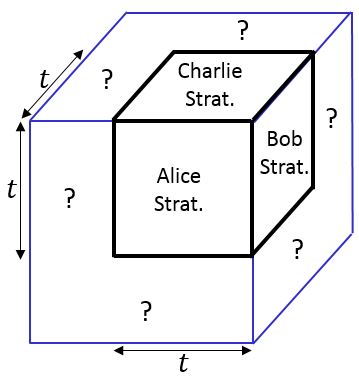}
\caption{ In the cube representation of strategies, fill the upper corner cube of size $t\times t\times t$ according to a strategy for $P$, and with full random choices elsewhere. The resulting distribution is denoted by $P_t=\Phi_P(t)$  and is given by~\eqref{eq:P_t-abc}.
}
\end{center}
\label{contractile}
\end{figure}

($iii$) Let $P^{(j)}$, $j=1, \dots, k$, be some distributions in $\NL$. Pick arbitrary $\epsilon_j\geq 0$ with $\sum_j \epsilon_j\leq 1$ and on each of the three orthogonal sides of the cube pick disjoint intervals of sizes $\epsilon_j$ for any $j$. Then in the cube one finds $k$ (disjoint) sub-cubes of sizes $\epsilon_j\times \epsilon_j\times \epsilon_j$ for any $j$ (see Fig.~\ref{ballP0strategies}). Now similar to the proof of part ($ii$) fill the $j$-th sub-cube according to the scaled strategy associated to $P^{(j)}$. This gives a distribution $Q\in \NL$ which can be derived following similar computation as that of~\eqref{eq:P_t-abc}:
\begin{align*}
Q(abc) = &\frac{1}{8}\Big(    1+ \sum_j \Big[-3\epsilon_j^2+2\epsilon_j^3 + 8 \epsilon_j^3 P^{(j)}(abc) \nonumber\\
&\quad + 2\epsilon_j^2(1-\epsilon_j)  \big(P^{(j)}_A(a)+P^{(j)}_B(b)+P^{(j)}_C(c)\big)\Big]  \Big),
\end{align*}
where 1 has to be interpreted as a vector full of ones (similarly for $\frac{1}{8}$ and $\frac{1}{2}$ in the following).

To continue the proof it is instructive two write down the distributions $P^{(j)}$ as 
$$P^{(j)} = \frac{1}{8}+ R^{(j)},$$
with marginals $P^{(j)}_A = \frac{1}{2}+ R_A^{(j)}$ etc. Then letting $Q=\frac{1}{8} + S$ the above equation can be rewritten as 
\begin{align}\label{eq:S-abc}
S(abc) = &\frac{1}{8} \sum_j \Gamma_{P^{(j)}, \epsilon_j}(abc),
\end{align}
where 
\begin{align*}
\Gamma_{P^{(j)}, \epsilon_j}(abc)=&8 \epsilon_j^3 R^{(j)}(abc)\\ 
& + 2\epsilon_j^2(1-\epsilon_j)  \big(R^{(j)}_A(a)+R^{(j)}_B(b)+R^{(j)}_C(c)\big).
\end{align*}

Now to finish the proof we need to show that any $S$ satisfying $\sum_{a, b, c} S(abc)=0$  and with sufficiently small coordinates can be written as~\eqref{eq:S-abc} for some $P^{(j)}\in \NL$ and some choices of $\epsilon_j$'s.

Let $P^{(xyz)}$, for $(x, y, z)\in \{0,1 , ?\}^3$ be the distribution coming from the cube whose Alice's face is labeled $x$, Bob's face is labeled $y$, and Charlie's face is labeled $z$. For instance we have
$$P^{(0??)} (abc) = \frac{1}{4}\delta_{a=0},$$
with $R^{(0??)}_A(a) = \frac 12 (-1)^a$ and $R^{(0??)}_B=R^{(0??)}_C=0$.
Then we have 
$$\Gamma_{P^{(0??)}, \epsilon}(abc) = \epsilon^2 (-1)^a,$$
and also $\Gamma_{P^{(1??)}, \epsilon}(abc) = - \Gamma_{P^{(0??)}, \epsilon}(abc)$.
We similarly can compute
$$\Gamma_{P^{(00?)}, \epsilon}(abc) = \epsilon^2\big( (-1)^a+(-1)^b \big) + \epsilon^3 (-1)^{a+b},$$
and
\begin{align*}
\Gamma_{P^{(000)}, \epsilon}(abc) = &\epsilon^2\big( (-1)^a+(-1)^b +(-1)^c\big) \\
&+ \epsilon^3 \big( (-1)^{a+b} +(-1)^{a+c} +(-1)^{b+c}  \big)\\
& + \epsilon^3(-1)^{a+b+c}.
\end{align*}
Comparing the above equations, we find that by considering the summations of these $\Gamma$ terms for different choices of $(x, y, z)\in \{0, 1, ?\}^3$, we can write the functions 
\begin{align*}
&\pm\delta (-1)^{a}, \pm\delta (-1)^{b}, \pm\delta (-1)^{ c},\\
&\pm\delta (-1)^{ a +b} , \pm\delta (-1)^{a+c}, \pm\delta (-1)^{a+ c}\\
&\pm\delta (-1)^{+ a+ b+ c},
\end{align*}
in the form of~\eqref{eq:S-abc} when $\delta$ is sufficiently small. Observing that these functions, which also include their negations, form a basis for the space of functions $S$ with $\sum_{abc} S(abc)=0$, the proof is concluded. 

Reference \cite{evans} is related to the same question and exploits a totally different framework. 

\begin{figure}[H]
\begin{center}
\includegraphics[scale=0.45]{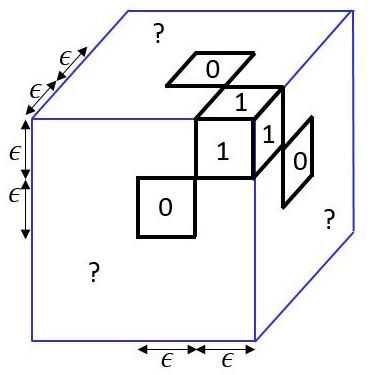}
\caption{ Suppose that two disjoint sub-cubes of size $\epsilon\times \epsilon\times \epsilon$ (with disjoint projections on the three orthogonal directions) are filled with constant $0$ and constant $1$, and the rest of the cube with totally random choices. The resulting tripartite distribution would be equal to $P=P_u+\frac{\epsilon^3}{4}V$ where $P_u$ is the maximally mixed (uniform) distribution and $V$ is given by $V(000)=V(111)=3$ and $V(abc)=-1$ if $(a, b,c)\in \{0,1\}^3\setminus\{000, 111\}$. 
}
\label{ballP0strategies}
\end{center}
\end{figure}

($i\nu$) This is a direct consequence of the Loomis-Whitney inequality, asserting that in $\mathbb R^3$, the square of the volume of any measurable subset is bounded by the product of the areas of its projections in the three orthogonal directions.

\end{proof}

%******************************************
\section{The Finner inequality in term of the Hypercontractivity Ribbon}\label{app:HR}

The Hypercontractivity Ribbon (HR) is a measure of correlation that can be defined in terms of parameters for which the Finner inequality is satisfied. In the tripartite case on which we focus, the HR of a distribution $P_{ABC}$ is the set of $(\alpha, \beta, \gamma)\in [0,1]^3$ for which
\begin{equation}\label{hypercontractivity_ribbon_2}
\mean{f_Ag_Bh_C}\leq  \|f_A\|_{1/\alpha}\cdot \|g_B\|_{1/\beta}\cdot \|h_C\|_{1/\gamma}, \end{equation}
for all choices of functions $f_A, g_B$ and $h_C$. We denote the HR of $P_{ABC}$ by $\fR(A, B, C)$. An important property of HR is that it expands under local post-processing. That is, if $A', B', C'$ are obtained by local post-processing of $A, B, C$ respectively, then we have 
\begin{align}\label{eq:HR-monotone}
\fR(A, B, C)\subseteq \fR(A', B', C').
\end{align}
More interesting is the tensorization property of HR saying that $\fR(A^n, B^n, C^n) = \fR(A, B, C)$ where the former is computed with respect to the iid distribution $P_{ABC}^{\otimes n}$. See~\cite{beigi} and references therein for more details.

Finner's inequality can be stated in terms of HR. In the triangle scenario, for instance, Finner's inequality says that for every $P_{ABC}\in \NL$ we have $(1/2, 1/2, 1/2)\in \fR(A, B, C)$. In general, Finner's inequality says that any fractional independent set of a network belongs to the HR of any distribution achievable in that network.

A crucial property of $\fR(A,B,C)$ is that it can be expressed in terms of the mutual information function as follows. $\fR(A, B, C)$ consists of the set of non-negative triples $(\alpha, \beta, \gamma)$ such that for any \emph{auxiliary} random variable $U$ given by $P_{U|ABC}$ we have   
\begin{equation}\label{hypercontractivity_ribbon_1}
I(U;ABC)\geq \alpha I(U;A) + \beta I(U;B) + \gamma I(U;C) 
\end{equation}
From this characterization of HR it is clear that $\fR(A, B, C)$ is a convex set. This is a property that will be used in the proof of Theorem~\ref{thm:quantum-finner}.

%***********************
\section{The Finner inequality holds in $\NQ$}
\label{finner_quantum}

We now present a proof of Theorem~\ref{thm:quantum-finner}, that the Finner inequality holds for any $P\in \NQ$ when the sources in the network $\mathcal N$ are all bipartite. 

First of all, as mentioned in Appendix~\ref{app:HR}, given a distribution $P$, the set of weights $\boldsymbol\eta = (\eta_1, \dots, \eta_n)$ for which the Finner inequality~\eqref{NQFinner} holds for all choices of $f_j$'s, is a convex set. That is, if~\eqref{NQFinner} holds for $\boldsymbol \eta$ and $\boldsymbol{\eta'}$, then it holds for any convex combination of them. Therefore, in order to show that the Finner inequality is satisfied for all weights $\boldsymbol \eta$ that form a fractional independent set (which itself is a convex set), it suffices to prove it for the extreme points of the set of fractional independent set. That is, in the proof we may assume that $\boldsymbol \eta =(\eta_1, \dots, \eta_n)$ is an extreme point of the set of fractional independent sets.

Second, we use the assumption that all sources in the network $\mathcal N$ are bipartite. It is well-known that in any graph the extreme points of the set of fractional independent sets are half-integers~\cite[Theorem 64.7]{Schrijver}. In other words, if $\boldsymbol \eta =(\eta_1, \dots, \eta_n)$ is an extreme fractional independent set, for all $j$ we have $\eta_j\in \{0, 1/2, 1\}$.
On the other hand, if one of $\eta_j$'s, say $\eta_n$, equals $0$, then we have $$\|f_n\|_{1/\eta_n} = \|f_n\|_\infty = \max_{a_n: P_{A_n}(a_n)\neq 0} f(a_n),$$
and
$$\mathbb E\Big[\prod f_j\Big] \leq \mathbb E\bigg[\,\prod_{j=1}^{n-1} f_j\,\bigg] \cdot \|f_n\|_\infty.$$
This means that if~\eqref{NQFinner} holds ignoring the $n$-th party (for the marginal distribution $P_{A_1, \dots, A_{n-1}}$), it also holds including her and putting $\eta_n=0$. We conclude that we may restrict ourselves to weights $\boldsymbol \eta =(\eta_1, \dots, \eta_n)$ such that $\eta_j\in \{1/2, 1\}$ for all $j$. Even more, for such weights if there is $j$, say $j=n$, with $\eta_j=1$, then the $n$-th party cannot share any source with others. This is because if $A_{j'}$ shares a source with $A_n$ then we must have $\eta_n+\eta_{j'}\leq 1$ that is a contradiction since $\eta_{j'}$ is assumed to be in $\{1/2, 1\}$. This means such a party $A_n$ with $\eta_n=1$ is isolated and shares nothing with others. In this case we have
$$\mathbb E\Big[\prod f_j\Big] =  \mathbb E\bigg[\,\prod_{j=1}^{n-1} f_j\,\bigg] \cdot \mathbb E[f_n],$$
and of course $\|f_n\|_{1/\eta_n} = \|f_n\|_1= \mathbb E[f_n]$. As a result, parties whose weights are equal to $1$ can be ignored. Putting all these together we may assume that all the weights are equal to $\eta_j=1/2$ and we need to prove
\begin{align}\label{eq:Q-Finner-2}
\mean{\prod_j f_j} \leq \prod_j \norm{f_j}_{2},
\end{align}

Third, recall that $f_j$ is an arbitrary function applied on the output of the $j$-th party $A_j$. Composing the measurement operators of $A_j$ with this classical post-processing, we may assume that $f_j$ is the outcome of some quantum observable $X_j$ that the $j$-th party applies on quantum systems in her hand. Indeed, we may put
$$X_j = \sum_{a_j} f_j(a_j) M_{a_j}^{(j)},$$
where $\big\{M_{a_j}^{(j)}\big\}$ is the projective measurement applied by $A_j$.
Thus we have
\begin{align}\label{eq:mean-f-j-X}
\mathbb E\Big[ \prod_j f_j \Big] = \text{Tr} \Big[ \rho\cdot \bigotimes_j X_j  \Big],
\end{align}
where as before $\rho=\bigotimes_i \rho_i$ and $\rho_i$ is the pure state associated to the $i$-th source. 
%Moreover, for any $j$ we have
%$$\|f_j\|_{1/\eta_j}^{\eta_j} = \mathbb E\Big[|f_j|^{1/\eta_j}\Big] = \text{Tr} \Big[ \rho_j \cdot |X_j|^{1/\eta_j}  \Big]$$

%Using the notation developed earlier, the $i$-th source distributes a quantum state to the two parties in $e_i$ which we denote by $A_{j_{i_1}}$

Recall that for each source $i$, $\rho_i$ is a pure bipartite state, so we may consider its Schmidt decomposition. For simplicity of notation we may assume that the dimension of all subsystems in $\rho_i$'s are equal (by taking the maximum of all these local dimensions). Moreover, by applying appropriate local rotations we may assume that the Schmidt basis of all $\rho_i$'s are the same.  Thus we may write $\rho_i = \ket{\psi_i}\bra{\psi_i}$ with
$$\ket{\psi_i} = \sum_{\ell=1}^d  \lambda_\ell^{(i)} \ket{\ell}\otimes \ket{\ell},$$
where $\lambda_\ell^{(i)}\geq 0$ denote Schmidt coefficients of $\rho_i$.
Then~\eqref{eq:mean-f-j-X} reduces to
\begin{equation*}
\mathbb E\Big[ \prod_j f_j \Big] = \sum_{\stackrel{\ell_1, \dots, \ell_m} {\ell'_1, \dots, \ell'_m}} \prod_i  \lambda^{(i)}_{\ell_i} \lambda^{(i)}_{\ell'_i} \cdot \prod_j  \text{Tr}\Big[ X_j\cdot \bigotimes_{i: i\to j} \ket{\ell_i}\bra{\ell'_i}   \Big].
\end{equation*}

For any $j$ define $g_j: \prod_{i: i\to j} \{1, \dots, d\}^2\to \mathbb R$ by
$$g_j\Big(  (\ell_i, \ell'_i )_{i: i\to j} \Big) = \prod_{i:i\to j} \sqrt{\lambda^{(i)}_{\ell_i} \lambda^{(i)}_{\ell'_i}} \cdot \text{Tr} \Big[ X_j\cdot  \bigotimes_{i: i\to j} \ket{\ell_i}\bra{\ell'_i}\Big].$$
Also, let $R_i$ be the uniform random variable taking values in $\{1, \dots, d\}^2$, i.e., $R_i$ equals $(\ell, \ell')$ with probability $d^{-2}$.  Then the previous equation can be written as
\begin{equation*}
\mathbb E\Big[ \prod_j f_j \Big] = d^{2m} \mathbb E\Big[    \prod_j g_j\big(  (R_i)_{i\to j} \big)     \Big]
\end{equation*}
Now we may think of $R_i$'s as sources of randomnesses that are shared to the parties who apply local functions $g_j$ on them. Then by the (classical) Finner inequality we have
$$\mathbb E\Big[ \prod_j f_j \Big]\leq d^{2m} \prod_j \|g_j\|_{2}.$$
Let us compute the factors on the right hand side:
\begin{align*}
 &d^{2\cdot|\{i:\, i\to j\}|}\cdot\|g_j\|^2_{2} =  d^{2\cdot|\{i:\, i\to j\}|}\cdot\mathbb E\big[ g_j^2 \big]\\
& =  \sum_{(\ell_i, \ell'_i)_{i: i\to j}}  \prod_{i:i\to j} \lambda^{(i)}_{\ell_i} \lambda^{(i)}_{\ell'_i} \cdot \text{Tr} \Big[ X_j\cdot  \bigotimes_{i: i\to j} \ket{\ell_i}\bra{\ell'_i}\Big]^2\\
& = \text{Tr} \big[ \sqrt{\sigma_j} X_j\sqrt{\sigma_j} X_j\big],
\end{align*}
where 
$$\sigma_j= \sum_{(\ell_i)_{i:i\to j}} \Big(\prod_{i:i\to j}\lambda^{(i)}_{\ell_i}\Big)^2 \bigotimes_{i:i\to j} \ket{\ell_i}\bra{\ell_i}.$$
We continue 
\begin{align*}
d^{2\cdot|\{i:\, i\to j\}|}\cdot\|g_j\|^2_{2} & =  \text{Tr} \Big[ \big(\sigma_j^{1/4} X_j\sigma_j^{1/4}\big)^2\Big]\\
& \leq \text{Tr} \Big[  \sigma_j^{1/2}X_j^2 \sigma_j^{1/2}   \Big]\\
&= \text{Tr} \Big[  \sigma_j X_j^2    \Big]\\
& = \|f_j\|_2^2,
\end{align*}
where the inequality follows from the Araki-Lieb-Thirring inequality, and the last equality is verified by an easy computation. We conclude that
\begin{align*}
\mathbb E\Big[ \prod_j f_j \Big]&\leq d^{2m} \prod_j \|g_j\|_{2}\\ 
&= \prod_j d^{|\{i:\, i\to j\}|} \cdot \|g_j\|_2\\
&\leq \prod_j \|f_j\|_2,
\end{align*}
where the equality follows from the fact that the sources are bipartite and for each $i$ there are exactly two $j$'s for which $i\to j$.

%****************************************
\section{Finner inequality holds for triangle scenario in the Boxworld}\label{finner_boxworld_proof}

In this section we show that the Finner inequality is satisfied for the triangle scenario in the Boxworld, where bipartite no-signaling boxes are wired by the parties to produce an output.  As explained in Appendix \ref{finner_quantum}, we can restrict ourselves to coefficients 1/2.

%More precisely, writing respectively $T, S, R$ \marco {why not $RST$? It would make more sense?} all the information available to Alice, Bob and Charlie at the end of their boxes (i.e. all the local inputs, outputs and choice of ordering of boxes), we have to show:

\begin{theorem} [Finner inequality in Boxworld]\label{finner_bw}
Letting $\mathcal N$ be the triangle network, for any $P_{ABC}\in \NB$ we have
$$\mean{f_Ag_Bh_C}\leq  \|f_A\|_{2}\cdot \|g_B\|_{2}\cdot \|h_C\|_{2}.$$
\end{theorem}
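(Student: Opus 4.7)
The natural strategy is to use the mutual-information description of the hypercontractivity ribbon given in Appendix~\ref{app:HR} and reduce the statement to a Shearer-type inequality on the transcripts produced by the wiring. By that description, $(1/2,1/2,1/2) \in \fR(A,B,C)$ — and hence the inequality of Theorem~\ref{finner_bw} — is equivalent to
\begin{equation}\label{eq:plan-goal}
2\, I(U; A, B, C) \;\geq\; I(U;A) + I(U;B) + I(U;C)
\end{equation}
for every auxiliary $U$ drawn from some conditional $P_{U|ABC}$.

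\medskip

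To prove \eqref{eq:plan-goal} I would first unroll the wiring into a classical stochastic process, so that each party $X\in\{A,B,C\}$ produces a complete transcript $\Omega_X$ of her inputs and outputs, and her final answer is a function of $\Omega_X$. By the monotonicity of $\fR$ under local post-processing (Eq.~\eqref{eq:HR-monotone}) it then suffices to prove \eqref{eq:plan-goal} with $(A,B,C)$ replaced by $(\Omega_A,\Omega_B,\Omega_C)$. I would next introduce, for each source $i\in\{\alpha,\beta,\gamma\}$ (each a bipartite NS box shared between two of the parties), the full two-sided source transcript $T_i$. Then $\Omega_X$ is a function of the two $T_i$'s incident to $X$, and together $(T_\alpha,T_\beta,T_\gamma)$ and $(\Omega_A,\Omega_B,\Omega_C)$ determine each other, so $I(U;\Omega_A,\Omega_B,\Omega_C)=I(U;T_\alpha,T_\beta,T_\gamma)$. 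Consequently the task reduces to proving the Shearer-type inequality
\begin{equation}\label{eq:plan-shearer}
I(U;T_\beta,T_\gamma) + I(U;T_\gamma,T_\alpha) + I(U;T_\alpha,T_\beta) \;\leq\; 2\, I(U; T_\alpha,T_\beta,T_\gamma),
\end{equation}
after which the data-processing inequality $I(U;\Omega_X)\leq I(U;T_i,T_j)$ for the pair $(i,j)$ incident to $X$ yields \eqref{eq:plan-goal}.

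\medskip

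\noindent\emph{Main obstacle.} In the classical case \eqref{eq:plan-shearer} is immediate from Shearer's lemma together with the unconditional independence of the three source variables — precisely the property that adaptive wiring destroys in the Boxworld, since a party's input to one box can depend on an earlier output from another of her boxes and thus correlate the $T_i$'s. My plan is to regain a sufficient substitute by induction on the number of elementary queries. Each query appends a single input/output pair to one $T_i$, and the no-signaling property of the queried box implies that the conditional distribution of this new pair is determined only by the transcript already collected on the two sides of that same source, independently of all other randomness in the experiment. This localization is exactly what is needed to propagate \eqref{eq:plan-shearer} from one wiring step to the next, because it makes the entropic submodularity relations underlying Shearer's lemma invariant under the update. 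Verifying that no-signaling supplies precisely this conditional-independence identity, and checking that it composes correctly through many interleaved queries across the three sources, is the technical heart of the proof.
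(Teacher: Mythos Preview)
There is a genuine gap: your Shearer-type inequality \eqref{eq:plan-shearer} on the \emph{source} transcripts $(T_\alpha,T_\beta,T_\gamma)$ is strictly stronger than the target inequality on the \emph{party} transcripts $(\Omega_A,\Omega_B,\Omega_C)$, and it is in fact false in the Boxworld triangle. Here is a counterexample. Let sources $\alpha$ (Bob--Charlie) and $\beta$ (Alice--Charlie) be independent uniform shared bits $W_\alpha,W_\beta$, and let source $\gamma$ (Alice--Bob) be any NS box on which Bob has an input bit (a PR box, or even the trivial box whose outputs are identically $0$). Let Bob's wiring feed $y=W_\alpha$ as his $\gamma$-input. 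Then the two-sided transcript $T_\gamma$ records Bob's input and therefore determines $W_\alpha=T_\alpha$. Taking $U=W_\alpha$ gives
\[
I(U;T_\beta,T_\gamma)+I(U;T_\alpha,T_\gamma)+I(U;T_\alpha,T_\beta)=1+1+1=3>2=2\,I(U;T_\alpha,T_\beta,T_\gamma),
\]
so \eqref{eq:plan-shearer} fails. (The party-transcript inequality $\sum_X I(U;\Omega_X)\le 2\,I(U;\Omega_A,\Omega_B,\Omega_C)$ still holds here, with equality: $\Omega_A\leftrightarrow W_\beta$, $\Omega_B\leftrightarrow W_\alpha$, $\Omega_C=(W_\alpha,W_\beta)$, giving $0+1+1=2$.) The mechanism is exactly the one you flagged but underestimated: because $T_\gamma$ contains \emph{both} parties' inputs, Bob's wiring smuggles $T_\alpha$ wholesale into $T_\gamma$; no-signaling of the $\gamma$-box says nothing about this, so no induction on queries can establish \eqref{eq:plan-shearer}.

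The paper accordingly never passes to source transcripts. It works directly with the party transcripts $(T,S,R)$ and expands $\chi=I(U;TSR)-\tfrac12\sum I(U;\cdot)$ by the chain rule along the temporal steps of the wirings, splitting into an input part $\chi_I$ (nonnegative because inputs are chosen locally) and an output part $\chi_O$. The summands in $\chi_O$ are then reindexed by box rather than by time; no-signaling (Lemmas~\ref{lemma:input_part}--\ref{C_given_AB}) removes the cross terms, and what remains is handled by a symmetrization trick: expanding $I(U;TSR)$ in the order Alice--Bob--Charlie yields $2\chi_O\ge L$, while the reverse order yields $2\chi_O\ge -L$, forcing $\chi_O\ge 0$. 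If you want to salvage your plan you must abandon \eqref{eq:plan-shearer} and attack the party-transcript inequality directly, which essentially leads back to this step-by-step, box-by-box argument rather than a single global Shearer bound.
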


Before getting into the details of the proof, let us briefly explain the proof ideas. First, as mentioned in Appendix~\ref{app:HR}, the above theorem says that for any $P_{ABC}\in \NB$ we have $(1/2, 1/2, 1/2)\in \fR(A, B, C)$. On the other hand, since HR is monotone under local post-processing (equation~\eqref{eq:HR-monotone}), it suffices to prove that $(1/2, 1/2, 1/2)\in \fR(T, R, S)$ where $T, R$ and $S$ denote all information available to Alice, Bob and Charlie respectively, at the end of the wirings. This is because, $A, B, C$ are functions of $T, R, S$ respectively.  
Next we can use the second equivalent characterization of HR, and in order to prove $(1/2, 1/2, 1/2)\in \fR(T, R, S)$ show that  
\begin{align*}
    \chi = I(U; TRS) - \frac{1}{2}I(U;T) - \frac{1}{2}I(U;R) - \frac{1}{2}I(U;S)\ge 0. 
\end{align*}
The proof of this inequality is based on the chain rule of mutual information. In the wiring, each party uses her boxed in hand one by one: for each time-step new information is added to her transcript. Therefore, we expand each mutual information term in the above equation as a summation over time-steps using the chain rule. 
We further expand each time step into the choice of box and input, and the creation of the output. 
As a result we obtain $\chi = \chi_I + \chi_O$, where $\chi_I$ (respectively, $\chi_O$) is a summation of terms corresponding to all the choices of boxes and their inputs (respectively, outputs) by the parties.

As the choice of boxes and their inputs are done locally and independently of the sources, $\chi_I$ can easily be bounded. As the order in which Alice, Bob and Charlie choose there boxes is a priori not the same, bounding $\chi_O$ is more tricky. One has to first reorder the summation in $\chi_O$ to put terms associated to a given box together, and then use properties of mutual information to bound $\chi_O$. We use the no-signaling assumption in this last step.

This proof follows the ideas introduced in \cite{beigi}, in which the author prove that PR boxes cannot be purified using wirings.

%*******
\subsection{Notations}

Here we introduce the notations we need to prove Theorem~\ref{finner_bw}. We define notations for Alice and then recap them for Bob and Charlie.

Let $J_{AB}$ (respectively $J_{AC}$) be the set of index of all boxes shared between Alice and Bob (respectively Alice and Charlie). Let $N_A$ be the number of boxes available to Alice, i.e., $N_A = |J_{AB}|+|J_{AC}|$. 

For $j \in J_{AB}$, let $X_j$, $Y_j$ (respectively $A_j$, $B_j$) be the inputs (respectively, the outputs) of the box $j$. This box is described by a no-signaling conditional distribution $P(A_j B_j|X_j Y_j)$.

In each time-step Alice chooses which box to use next and its input as a random function of whatever she has so far, i.e. previous choices of boxes, their inputs and their outputs. We denote by $\Pi_j$ the time-step at which Alice uses box $j$. Thus $\Pi$ is a permutation of the boxes available to Alice. We denote the inverse of this distribution by $\tilde{\Pi}$. That is, $\tilde{\Pi}_i$ is the index of the box used by Alice at time-step $i$.

Let $T_j$ be the \emph {transcript} of Alice before using the $j$-th box, i.e., whatever she has seen before using the $j$-th box. We also denote Alice's \emph{extended transcript} by ${T}_j^e$ that is $T_j$ together with $\Pi_j$ and $X_j$: 
$$T_j^e=( T_j,\Pi_j,X_j).$$ 
We also use the notations
$\tilde{X}_i=X_{\tilde{\Pi}_i}$, $\tilde{A}_i=A_{\tilde{\Pi}_i}$, $\tilde{T}_i=T_{\tilde{\Pi}_i}$ and $\tilde{T}^e_i=T^e_{\tilde{\Pi}_i}$. Observe that, for instance, $\tilde X_i$ is the box that Alice uses in time-step $i$. With these notations we have
$$T_j := \big ( \tilde{\Pi}_1, \dots, \tilde{\Pi}_{\Pi_{j-1}}, \tilde{X}_1, \dots, \tilde{X}_{\Pi_{j-1}}, \tilde{A}_1, \dots, \tilde{A}_{\Pi_{j-1}} \big ).$$

Here is a summary of notations for later use:
\begin{itemize}
    \item $\Pi_j$ : Alice uses the $j$-th box in her $\Pi_j$-th action. 
    \item $\tilde{\Pi}_i$: Alice uses the $\tilde{\Pi}_i$-th box in her $i$-th action.
    \item $X_j$ : Alice's input of the $j$-th box.
    \item $\tilde{X}_i$ : Alice's input in her $i$-th action.    
    \item $A_j$ : Alice's output of the $j$-th box.  
    \item $\tilde{A}_i$ : Alice's output in her $i$-th action.
    \item $T_j$ : Alice's transcript before using the $j$-th box.
    \item $\tilde{T}_i$ :Alice's transcript before her $i$-th action. 
    \item $T_j^e=\{ T_j,\Pi_j,X_j\}$.
    \item $\tilde{T}_i^e=\{ \tilde{T}_i,\tilde{\Pi}_i,\tilde{X}_i\}$. 
\end{itemize}

We use superscript $N_A$ to denote the full set of variables at the end, e.g., $A^{N_A}= (A_1, \ldots, A_{N_A})$. At the end, Alice determines her final output by applying a stochastic map on all information available to her, i.e., on $\Pi^{N_A}, X^{N_A}, A^{N_A}$ which we denote by
\begin{align}\label{eq:T-transcript}
T=\big(\Pi^{N_A}, X^{N_A}, A^{N_A}\big).
\end{align}

At the end Alice, Bob, and Charlie apply their stochastic maps to determine their final output.

The corresponding variables in the above list for Bob are  $\Omega_j, \tilde{\Omega}_i, Y_j, \tilde{Y}_i, B_j, \tilde{B}_i, S_j, \tilde{S}_i, S_j^e, \tilde S_i^e$ respectively. 
The corresponding variables in the above list for  Charlie are $\Gamma_j, \tilde{\Gamma}_i, Z_j, \tilde{Z}_i, C_j, \tilde{C}_i, R_j, \tilde{R}_i, R_j^e, \tilde R_i^e$ respectively

\subsection{Auxiliary lemmas}

In this section, we introduce some lemmas deduced from the no-signaling condition.

\begin{lemma}\label{lemma:input_part}
For every $1\leq i\leq N_B$ we have
$$I(\tilde{Y}_i \tilde{\Omega}_i; T | \tilde{S}_i  ) = 0,$$
and for every $1\leq i\leq N_C$ we have
$$I(\tilde{Z}_i \tilde{\Gamma}_i; T S | \tilde{R}_i  ) = 0.$$
\end{lemma}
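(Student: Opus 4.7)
The plan is to exploit the no-signaling structure of the boxes together with the fact that at every time-step a wiring strategy consults only fresh local randomness. Concretely, at Bob's $i$-th action one may write $(\tilde\Omega_i,\tilde Y_i) = \phi_i(\tilde S_i, W_B^{(i)})$, where $W_B^{(i)}$ is a local random variable independent of all variables introduced up to that point. Thus, conditional on $\tilde S_i$, the pair $(\tilde\Omega_i,\tilde Y_i)$ is a deterministic function of $W_B^{(i)}$ alone, and the lemma reduces to showing that $W_B^{(i)}$ is independent of $T$ given $\tilde S_i$.

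For this I would first invoke a no-signaling realization of each bipartite box. Since $P(A_j\mid X_jY_j)=P(A_j\mid X_j)$ for every $j\in J_{AB}$, one can realize the box via fresh independent randomnesses $(\omega^A_j,\omega^B_j)$ with $A_j=\varphi^A_j(X_j,\omega^A_j)$ and $B_j=\varphi^B_j(X_j,Y_j,\omega^A_j,\omega^B_j)$; in particular, $\omega^A_j$ is independent of $W_B^{(i)}$ and of all Bob-side variables not derived through the box itself. The same decomposition applies to each $j\in J_{AC}$ with $\omega^A_j$ independent of Charlie. Consequently, Alice's entire process, and hence her final transcript $T$ from \eqref{eq:T-transcript}, is a measurable function of her local randomness together with $\{\omega^A_j\}_{j}$; none of these variables involves $W_B^{(i)}$, so $T$ is independent of $W_B^{(i)}$. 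Combined with the obvious $W_B^{(i)}$-independence of $\tilde S_i$ (the latter being built from strictly earlier variables), one obtains joint independence of $W_B^{(i)}$ from $(T,\tilde S_i)$, which yields $I(\tilde\Omega_i\tilde Y_i;T\mid\tilde S_i)=0$.

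The second statement is entirely analogous with Charlie replacing Bob, the only extra check being that $T S$ does not depend on Charlie's fresh randomness $W_C^{(i)}$. For $T$ this is the same no-signaling argument applied to the Alice-Charlie boxes. For $S$ it is the same argument applied to the Bob-Charlie boxes: Bob's outputs on those boxes depend only on Bob's own inputs (no-signaling from Charlie to Bob), so $S$ is determined by Bob's local randomness together with the Alice-side or Bob-side auxiliary randomness of the boxes he shares, none of which involves $W_C^{(i)}$. The Markov-chain step then gives $I(\tilde Z_i\tilde\Gamma_i;TS\mid\tilde R_i)=0$.

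The main delicate point, and the step I would write out most carefully, is the no-signaling realization of each box in terms of the independent $\omega^A_j,\omega^B_j$; everything else is essentially bookkeeping on top of the observation that wiring strategies use only fresh randomness at each step. A slicker alternative, which sidesteps any explicit realization, is simply to remark that the conditional distribution of $(\tilde\Omega_i,\tilde Y_i)$ given $\tilde S_i$ is prescribed by Bob's strategy and is therefore the same whether or not one further conditions on $T$ (resp.\ $TS$); this is the defining property of conditional independence and is the form I would fall back on if carrying around the explicit realization became cumbersome.
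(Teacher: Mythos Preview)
Your approach is correct and matches the paper's reasoning, which simply observes that the identities follow from (i) each party selecting its next box and input as a local stochastic function of its current transcript, and (ii) the no-signaling property, which prevents the other parties' transcripts from carrying any information about that fresh local randomness. Your explicit no-signaling realization $A_j=\varphi^A_j(X_j,\omega^A_j)$ (and the analogous one-sided realizations for the other box types) is a perfectly good way to make point (ii) rigorous; the paper leaves this step at the level of a one-line remark.
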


\begin{proof}
These expressions are simple consequences of the fact that each party at time-step $i$ chooses a box and its input locally as a (stochastic) function of the transcript at step $i$, and other parties cannot signal using the boxes. 

% express the fact that at timestep $i$, Bob chooses next box and its input only from his past as Alice cannot signal to Bob. More precisely, it can be deduced from the relation $P(\tilde{Y}_i \tilde{\Omega}_i|\tilde{S}_i T)=P(\tilde{Y}_i \tilde{\Omega}_i|\tilde{S}_i)$. $(ii)$ can be proven in a similar way.
\end{proof}

\begin{lemma}\label{B_given_A}
For boxes available to Bob we have
\begin{enumerate}
\item[(i)] For $j \in J_{AB}$:
%$$I(U; B_j|T S_j^e) \geq I(U; B_j|T_j^e A_j S_j^e)$$
%$$ I\big(B_j; T\setminus\{T_j^e,A_j\}\big|T_j^e A_j S_j^e\big) = 0$$
$$ I(B_j; T|T_j^e A_j S_j^e) = 0$$
\item[(ii)] For $j \in J_{BC}$:
%$$I(U; B_j|T S_j^e) \geq I(U; B_j| S_j^e)$$
$$I(B_j;T |S_j^e) = 0$$
\end{enumerate}
\end{lemma}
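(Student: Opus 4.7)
The plan is to exploit the no-signaling property of each box together with the chain rule for mutual information. The unifying observation is that any bipartite no-signaling box with joint distribution $P(A_j B_j | X_j Y_j)$ admits a sequential sampling description: since $P(A_j | X_j Y_j) = P(A_j | X_j)$ is well defined by no-signaling, one can first sample $A_j$ from $P(A_j | X_j)$ using an ``intrinsic randomness'' variable $\Xi_j^A$, and then sample $B_j$ from $P(B_j | A_j X_j Y_j)$ using an independent variable $\Xi_j^B$. The variables $\Xi_\cdot$ attached to distinct sources are mutually independent and independent of every party's local randomness.

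For part (ii), with $j \in J_{BC}$, Alice never touches box $j$, so $T$ is a function only of her local randomness and of intrinsic randomnesses of boxes incident to her, none of which includes $\Xi_j$. Given $S_j^e$, the input $Y_j$ is fixed and $B_j$ becomes a function of $\Xi_j$ alone. Any route by which $\Xi_j$ could influence $T$ must go through Bob's or Charlie's later inputs to boxes they share with Alice; but the no-signaling of each such shared box guarantees that, conditioned on Alice's own input, her output distribution is independent of the other party's input. This blocks the would-be information channel and yields $B_j \perp T \,|\, S_j^e$.

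For part (i), with $j \in J_{AB}$, conditioning on $A_j$ reduces $B_j$ to a function of the residual intrinsic randomness $\Xi_j^B$, so it suffices to prove $\Xi_j^B \perp T \,|\, T_j^e A_j S_j^e$. I would argue by a chain-rule induction over Alice's actions at time-steps $i > \Pi_j$: the choice of box $\tilde\Pi_i$ and input $\tilde X_i$ is a stochastic function of her past transcript alone and contributes no mutual information with $\Xi_j^B$; the resulting output $\tilde A_i = A_k$ for some $k \neq j$ depends on $\Xi_j^B$ only through the other party's input $Y_k$ (or $Z_k$) to box $k$, but the no-signaling identity $P(A_k | X_k, Y_k) = P(A_k | X_k)$ cancels precisely this dependence. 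Summing the vanishing terms gives $I(B_j; T | T_j^e A_j S_j^e) = 0$.

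The main obstacle is the inductive bookkeeping: one must verify that no-signaling can legitimately be invoked in the presence of all the accumulated conditioning. Although Bob's transcript after time $\Omega_j$ clearly carries information about $B_j$ and hence about $\Xi_j^B$, every subsequent use by Alice of an Alice-Bob or Alice-Charlie box must recreate a fresh no-signaling barrier preventing this information from entering $T$. Lemma 1 is the prototype of exactly this kind of statement, so I expect the present proof to build directly on that pattern, peeling off one Alice time-step at a time.
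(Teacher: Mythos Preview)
Your approach is sound and identifies the correct Markov structure, but the paper bypasses both the explicit latent-variable construction $(\Xi_j^A,\Xi_j^B)$ and the time-step induction entirely.

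For part (i) the paper argues in two lines of entropy identities. Writing $H(B_j\mid T,S_j^e)=H(A_jB_j\mid T\!\setminus\!\{A_j\},S_j^e)-H(A_j\mid T\!\setminus\!\{A_j\},S_j^e)$ by the chain rule, it then invokes once the fact that the pair $(A_j,B_j)$, conditionally on the inputs $X_j,Y_j$ (already contained in $T_j^e,S_j^e$), is generated by box randomness independent of every other variable in the model. This single observation collapses the difference to $H(A_jB_j\mid T_j^e,S_j^e)-H(A_j\mid T_j^e,S_j^e)=H(B_j\mid T_j^e A_j S_j^e)$. No induction over Alice's subsequent steps is needed: all of her future is handled at once because it factors through $A_j$.

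For part (ii) the paper's trick is to \emph{enlarge} rather than decompose: add Charlie's extended transcript $R_j^e$ to the conditioning (legitimate because $I(B_j;R_j^e\mid S_j^e)=0$ by no-signaling), then enlarge $B_j$ to $B_jC_j$, and finally observe that once both inputs of the $BC$-box are fixed its outputs are independent of $T$. Three applications of data processing and chain rule replace your argument about information being blocked at every intermediate $AB$- or $AC$-box.

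Both proofs rest on the same physical fact---box outputs given both inputs are fresh randomness---but the paper packages it as a single global conditional-independence statement, while you unfold it along the wiring. Your route would work and makes the causal picture vivid, at the cost of exactly the bookkeeping you flag; the paper's route is shorter but leaves that picture implicit.
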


\begin{proof}
$(i)$ states the independence of Bob's output of the $j$-th box and the future information in Alice's side, given all the information of Alice and Bob, except $B_j$, up to just after using this box.
To prove this, it is enough to show that $H(B_j|T_j^e A_j S_j^e)=H(B_j|T S_j^e)$, for which we compute
\begin{align*}
H(B_j|T S_j^e) & = H(A_jB_j|T\setminus\{A_j\} S_j^e) - H(A_j| T\setminus\{A_j\} S_j^e) \\
& = H(A_jB_j|T_j^e S_j^e) - H(A_j| T_j^e S_j^e)\\
& = H(B_j| T_j^e A_j S_j^e).
\end{align*}
Here the second line follows from the fact that $A_j$ and $B_j$ are determined independently of the other variables once the inputs of the $j$-th box are fixed.\\

$(ii)$ states that given the input of Bob for a box shared between Bob and Charlie, Bob's output is independent of Alice's transcript.
To prove this we compute
\begin{align*}
I(T;B_j|S_j^e) & \leq  I(TR_j^e; B_j| S_j^e) \\
& = I(T;B_j|S_j^e R_j^e)\\
& \leq I(T;B_j C_j|S_j^e R_j^e).
\end{align*}
Here the inequalities follow from the data processing inequality, and the equality follows from the chain rule and $I(B_j;R_j^e|S_j^e) = 0$, the no-signaling condition. Then the desired result follows once we note that $I(T;B_j C_j|S_j^e R_j^e)=0$ since the output of the $j$-th box are determined independently of other variables once its inputs are fixed. 

\end{proof}

The following lemma presents similar statements as above for boxes available for Charlie. We skip its proof as it follows from similar ideas as above. 

\begin{lemma}\label{C_given_AB}
For boxes available to Charlie we have
\begin{enumerate}
\item[(i)] For $j \in J_{BC}$:
%$$I(U; C_j|T S R_j^e) \geq I(U; C_j|S_j^e B_j R_j^e)$$
$$I(C_j;T S|S_j^e B_j R_j^e) = 0$$
\item[(ii)] For $j \in J_{AC}$:
%$$I(U; C_j|T S R_j^e) \geq I(U; C_j|T_j^e A_j R_j^e)$$
$$I(C_j;S T|T_j^e A_j R_j^e) = 0$$
\end{enumerate}
\end{lemma}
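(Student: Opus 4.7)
My plan is to transcribe the argument of Lemma \ref{B_given_A}(i) into the two situations of Lemma \ref{C_given_AB}, with Charlie taking the role played there by Bob. The novel feature of each of (i) and (ii) is that a third, bystander party --- Alice in (i), Bob in (ii) --- has her entire transcript on the ``independent from'' side of the mutual information, and this extra transcript must be absorbed via the no-signaling constraint rather than the box-independence used inside the original Lemma.

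For part (i), with $j \in J_{BC}$, I will start from $H(C_j \mid T S R_j^e)$ and apply the same chain-rule identity as in Lemma \ref{B_given_A}(i):
\begin{align*}
H(C_j \mid T S R_j^e) &= H(B_j C_j \mid T, S\setminus\{B_j\}, R_j^e) \\
&\quad - H(B_j \mid T, S\setminus\{B_j\}, R_j^e).
\end{align*}
The key step will be to collapse the conditioning in both right-hand entropies to $(S_j^e, R_j^e)$. Since $S_j^e$ contains $Y_j$ and $R_j^e$ contains $Z_j$, once these inputs are fixed the outputs $(B_j, C_j)$ are drawn from the box distribution $P(B_j C_j \mid Y_j Z_j)$ independently of the remaining transcripts; this is exactly the point invoked in Lemma \ref{B_given_A}(i). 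Additionally dropping Alice's transcript $T$ will require no-signaling: box $j$ does not involve Alice, so any residual correlation of $T$ with $(B_j, C_j)$ would correspond to signaling over the Alice--Bob or Alice--Charlie edges. Reassembling with the reverse chain rule then yields $H(C_j \mid S_j^e B_j R_j^e)$ and hence $I(C_j ; T S \mid S_j^e B_j R_j^e) = 0$. Part (ii), with $j \in J_{AC}$, will be handled by the identical template with the active pair taken to be Alice and Charlie and Bob in the bystander role: split off $A_j$ from $H(C_j \mid S T R_j^e)$, reduce the conditioning to $(T_j^e, R_j^e)$ by the same ``outputs determined by inputs'' property, and drop Bob's transcript $S$ via no-signaling.

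The main obstacle, as in the proof of Lemma \ref{B_given_A}(i), will be justifying that the bystander transcript contributes nothing to the two entropies: in principle it can be correlated with the relevant box outputs through downstream input choices made by the bystander, but the combination of box-randomness independence across distinct sources and the no-signaling constraints should make this correlation invisible at the level of the conditional entropies. I plan to make this step rigorous in the style of Lemma \ref{B_given_A}(ii), inserting the missing extended transcript into the conditioning, invoking a no-signaling equality such as $I(B_j ; R_j^e \mid S_j^e) = 0$ together with the data-processing inequality, and concluding by the independence of the box outputs from everything beyond their inputs.
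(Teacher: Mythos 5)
Your plan tracks what the paper intends by ``similar ideas as above'': both parts of Lemma~\ref{C_given_AB} are the Bob--Charlie and Alice--Charlie instances of the chain-rule argument of Lemma~\ref{B_given_A}(i), and the one genuinely new ingredient---the bystander party's full transcript ($T$ in (i), $S$ in (ii)) sitting on the independence side---is discharged by the insert-the-missing-extended-transcript / no-signaling / data-processing mechanism of Lemma~\ref{B_given_A}(ii), exactly as you describe. Since the paper itself omits the proof with precisely that pointer, your outline is a faithful and correctly structured expansion of it (the only small imprecision is that the no-signaling identity needed to drop the bystander is not literally $I(B_j;R_j^e\mid S_j^e)=0$ but its analogue for the bystander's boxes; the mechanism is the same).
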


In the following for four random variables $X, Y, Z, W$ we use the notation
\begin{align*}
I(X; Y; Z|W) =&H(X|W)+H(Y|W)+H(Z|W)  \\
&- H(XY|W)-H(XZ|W)-H(YZ|W)\\
&+ H(XYZ|W).
\end{align*}
We will frequently use the following expression for $I(X; Y; Z|W)$ which can easily be verified:
\begin{align}\label{important0}
I(X; Y; Z) = I(X; Y|W) - I(X; Y|WZ).
\end{align}
We indeed use the symmetry in the definition of $I(X; Y; Z|W)$ which gives
\begin{align}\label{important}
I(X; Y|W) - I(X; Y|WZ) = I(Y; Z|W) - I(Y; Z| WX).
\end{align}

%**************************
\subsection{Proof of Theorem \ref{finner_bw}} 

As mentioned before, we need to show that $(1/2, 1/2, 1/2)\in \fR(A, B, C)$. Moreover, since HR satisfies the monotonicity property~\eqref{eq:HR-monotone}, and $A, B, C$ are determined by post-processing of $T, S, R$ respectively, it suffices to prove that $(1/2, 1/2, 1/2)\in \fR(T, S, R)$. That is, we need to show that for any $P_{U|TSR}$ we have
\begin{align*}
    \chi = I(U; TRS) - \frac{1}{2}I(U;T) - \frac{1}{2}I(U;R) - \frac{1}{2}I(U;S)\ge 0. 
\end{align*}

We first write
\begin{align}\label{eq:chain-order}
I(U; TRS)=I(U;T) + I(U;S|T) +I(U;R|T S).
\end{align}
Then noting that, say, $T$ itself consists of several random variables as in~\eqref{eq:T-transcript}, we apply chain rule once again to each of the above terms.
This decomposes $\chi$ into two terms $\chi = \chi_I + \chi_O$ associated to the \emph{input parts} and the \emph{output parts} given by
\begin{align*}
\chi_I  = &\sum_{i=1}^{N_A} \Big[ I(U; \tilde{X}_i \tilde{\Pi}_i|\tilde{T}_i) - \frac{1}{2} I(U; \tilde{X}_i \tilde{\Pi}_i | \tilde{T}_i) \Big] \\
&+ \sum_{i=1}^{N_B} \Big[ I(U;\tilde{Y}_i\tilde{\Omega}_i|T \tilde{S}_i)  - \frac{1}{2} I(U; \tilde{Y}_i \tilde{\Omega}_i | \tilde{S}_i) \Big]\\
&+\sum_{i=1}^{N_C} \Big[ I(U;\tilde{Z}_i \tilde{\Gamma}_i| T S \tilde{R}_i)-\frac{1}{2}I(U;\tilde{Z}_i\tilde{\Gamma}_i|\tilde{R}_i)\Big]
\end{align*}
and
\begin{align*}
\chi_O = &\sum_{i=1}^{N_A} \Big[ I(U; \tilde{A}_i |\tilde{T}_i^e) - \frac{1}{2} I(U; \tilde{A}_i | \tilde{T}_i^e)  \Big] \\
&+ \sum_{i=1}^{N_B} \Big[ I(U;\tilde{B}_i|T \tilde{S}_i^e)  - \frac{1}{2} I(U; \tilde{B}_i | \tilde{S}_i^e) \Big]\\ 
&+ \sum_{i=1}^{N_C} \Big[ I(U;\tilde{C}_i| T S \tilde{R}_i^e)-\frac{1}{2}I(U;\tilde{C}_i|\tilde{R}_i^e)\Big]
\end{align*}
We will show separately that both $\chi_I$ and $\chi_O$ are non-negative.

Let us first start with $\chi_I\geq 0$ that is easy (as each party chooses its box and input independently, a stronger inequality holds with the terms $1/2$ replaced by $1$). The first summand in $\chi_I$ is non-negative since $I(U; \tilde{X}_i \tilde{\Pi}_i|\tilde{T}_i) - \frac{1}{2} I(U; \tilde{X}_i \tilde{\Pi}_i | \tilde{T}_i)= \frac{1}{2} I(U; \tilde{X}_i \tilde{\Pi}_i | \tilde{T}_i)\geq 0$. For the second summand we compute
\begin{align*}
&I(U;\tilde{Y}_i\tilde{\Omega}_i|T \tilde{S}_i)  - \frac{1}{2} I(U; \tilde{Y}_i \tilde{\Omega}_i | \tilde{S}_i) \\
 &\qquad\geq I(U;\tilde{Y}_i\tilde{\Omega}_i|T \tilde{S}_i)  - I(U; \tilde{Y}_i \tilde{\Omega}_i | \tilde{S}_i)\\
 &\qquad = - I(U; \tilde{Y}_i\tilde{\Omega}_i; T| \tilde S_i)\\
 &\qquad = I(T ;\tilde{Y}_i\tilde{\Omega}_i| U\tilde{S}_i)  - I(T ;\tilde{Y}_i\tilde{\Omega}_i| \tilde{S}_i)\\
 &\qquad = I(T ;\tilde{Y}_i\tilde{\Omega}_i| U\tilde{S}_i)\\
 &\qquad \geq 0,
\end{align*}
where the third equality follows from Lemma~\ref{lemma:input_part}. The proof that the third summand is non-negative is similar. Therefore, $\chi_I \geq 0$.

We now show that $\chi_O \geq 0$. Observe that by definitions $\tilde A_i= A_{\tilde \Pi_i}$ etc. Then splitting the summands in $\chi_O$ in terms of boxes shared between different pairs of parties, we obtain:
\begin{align*}
\chi_O =& \sum_{i:\,\tilde{\Pi}_i \in J_{AB}} \Big[ I(U; A_{\tilde{\Pi}_i}|T_{\tilde{\Pi}_i}^e) - \frac{1}{2} I(U; A_{\tilde{\Pi}_i} | T_{\tilde{\Pi}_i}^e)  \Big] \\
&+\sum_{i:\,\tilde{\Pi}_i \in J_{AC}} \Big[ I(U; A_{\tilde{\Pi}_i}|T_{\tilde{\Pi}_i}^e) - \frac{1}{2} I(U; A_{\tilde{\Pi}_i} | T_{\tilde{\Pi}_i}^e)  \Big] \\
&+ \sum_{i:\,\tilde{\Omega}_i \in J_{AB}} \Big[ I(U; B_{\tilde{\Omega}_i}|T S_{\tilde{\Omega}_i}^e) - \frac{1}{2} I(U; B_{\tilde{\Omega}_i} | S_{\tilde{\Omega}_i}^e)  \Big] \\
&+ \sum_{i:\,\tilde{\Omega}_i \in J_{BC}} \Big[ I(U; B_{\tilde{\Omega}_i}|T S_{\tilde{\Omega}_i}^e) - \frac{1}{2} I(U; B_{\tilde{\Omega}_i} | S_{\tilde{\Omega}_i}^e)  \Big]\\
&+ \sum_{i:\,\tilde{\Omega}_i \in J_{BC}} \Big[ I(U; C_{\tilde{\Gamma}_i}|T S R_{\tilde{\Gamma}_i}^e) - \frac{1}{2} I(U; C_{\tilde{\Gamma}_i} | R_{\tilde{\Gamma}_i}^e)  \Big]\\
&+ \sum_{i:\,\tilde{\Gamma}_i \in J_{AC}} \Big[ I(U; C_{\tilde{\Gamma}_i}|T S R_{\tilde{\Gamma}_i}^e) - \frac{1}{2} I(U; C_{\tilde{\Gamma}_i} |R_{\tilde{\Gamma}_i}^e)  \Big].
\end{align*}
Next, we rewrite $\chi_O$ by reordering the summands in terms of the indices of boxes and not time-steps:
\begin{align*}
\chi_O = &\sum_{j \in J_{AB}} \Big[ I(U; A_j|T_j^e) - \frac{1}{2} I(U; A_j | T_j^e)  \Big] \\
&+\sum_{j \in J_{AC}} \Big[ I(U; A_j|T_j^e) - \frac{1}{2} I(U; A_j | T_j^e)  \Big] \\
&+ \sum_{j \in J_{AB}} \Big[ I(U; B_j|T S_j^e) - \frac{1}{2} I(U; B_j | S_j^e)  \Big] \\
&+ \sum_{j \in J_{BC}} \Big[ I(U; B_j|T S_j^e) - \frac{1}{2} I(U; B_j | S_j^e)  \Big]\\
&+ \sum_{j \in J_{BC}} \Big[ I(U; C_j|T S R_j^e) - \frac{1}{2} I(U; C_j |R_j^e)  \Big] \\
&+ \sum_{j \in J_{AC}} \Big[ I(U; C_j|T S R_j^e) - \frac{1}{2} I(U; C_j |R_j^e)  \Big].
\end{align*}
Using \eqref{important} and Lemma \ref{B_given_A} (i), for $j \in J_{AB}$ we have
\begin{align}\label{g1}
I(U;& B_j|T S_j^e) - I(U; B_j|T_j^e A_j S_j^e) \nonumber \\
&=I(B_j; T|T_j^e A_j S_j^e U) \geq 0.
\end{align}
Moreover, by \eqref{important} and Lemma~\ref{B_given_A} (ii), for $j \in J_{BC}$ we have
\begin{align}\label{g2}
I(U; B_j|T S_j^e) - I(U; B_j| S_j^e) = I(B_j;T |S_j^e U) \geq 0
\end{align}
We similarly for $j \in J_{BC}$ have
\begin{align}\label{g3}
I(U; C_j|T S R_j^e) - I(U; C_j|S_j^e B_j R_j^e)\nonumber \\
= I(C_j;T S|S_j^e B_j R_j^e U) \geq 0
\end{align}
and for $j \in J_{AC}$ have
\begin{align}\label{g4}
I(U; C_j|T S R_j^e) - I(U; C_j|T_j^e A_j R_j^e)\nonumber \\ = I(C_j;S T|T_j^e A_j R_j^e U) \geq 0
\end{align}
Putting these together we find that
$$\chi_O \geq \chi_{O_1} + \chi_{O_2} + \chi_{O_3}, $$
where,
\begin{align*}
\chi_{O_1}  =&  \sum_{j \in J_{AB}} \Big[ I(U; A_j|T_j^e) + I(U; B_j|T_j^e A_j S_j^e) \\
&- \frac{1}{2} I(U; A_j | T_j^e) -  \frac{1}{2} I(U; B_j | S_j^e)  \Big], \\ \\
\chi_{O_2}  =&  \sum_{j \in J_{AC}} \Big[I(U; A_j|T_j^e) + I(U; C_j|T_j^e A_j R_j^e) \\
&- \frac{1}{2} I(U; A_j | T_j^e) - \frac{1}{2} I(U; C_j |R_j^e) \Big],\\ \\
\chi_{O_3} =&  \sum_{j \in J_{BC}} \Big[I(U; B_j| S_j^e) + I(U; C_j|S_j^e B_j R_j^e) \\
&- \frac{1}{2} I(U; B_j | S_j^e) -\frac{1}{2} I(U; C_j |R_j^e) \Big].\\
\end{align*}
By adding and subtracting $I(U; A_j|T_j^e S_j^e)$ and using~\eqref{important0}  and the chain rule we have 
\begin{align*}
\chi_{O_1}  &=  \sum_{j \in J_{AB}} \Big[ I(U; A_j; S_j^e|T_j^e) + I(U; A_jB_j|T_j^e S_j^e) \\ 
&\qquad - \frac{1}{2} I(U; A_j | T_j^e) -  \frac{1}{2} I(U; B_j | S_j^e)  \Big].
\end{align*}
Next by the data processing inequality we have
\begin{align*}
\chi_{O_1}  &\geq  \sum_{j \in J_{AB}} \Big[ I(U; A_j; S_j^e|T_j^e) \\
&\qquad +\frac12 I(U; A_j|T_j^e S_j^e)+\frac12 I(U; B_j|T_j^e S_j^e) \\ 
&\qquad - \frac{1}{2} I(U; A_j | T_j^e) -  \frac{1}{2} I(U; B_j | S_j^e)  \Big].
\end{align*}
On the other hand, by the no-signaling condition $I(S_j^e; A_j|T_j^e)=0$ we have
\begin{align*}
I(U; A_j|T_j^e S_j^e) &= I(US_j^e; A_j| T_j^e) - I(S_j^e; A_j|T_j^e)\\
&=I(US_j^e; A_j| T_j^e)\\
& = I(U; A_j| T_j^e) + I(S_j^e; A_j| T_j^e U)
\end{align*}
We similarly have
$$I(U; B_j|T_j^e S_j^e) = I(U; B_j| S_j^e) + I(T_j^e; B_j| S_j^e U).$$
Therefore,
\begin{align*}
\chi_{O_1} \geq & \sum_{j \in J_{AB}} \Big[  I(U; A_j; S_j^e |T_j^e )  \\
& \qquad +\frac 12I(S_j^e; A_j| T_j^e U) + \frac12 I(T_j^e; B_j| S_j^e U) \Big]. 
\end{align*}
Next, using~\eqref{important0} and $I(S_j^e; A_j|T_j^e)=0$ we find that
$I(U; A_j; S_j^e |T_j^e ) = -I(A_j; S_j^e| T_j^eU)$.
Putting these together we arrive at
$$2\chi_{O_1} \geq \sum_{j \in J_{AB}} \Big[  I(B_j;T_j^e|S_j^e U)- I(A_j; S_j^e|T_j^e U) \Big] := L_{A\to B}.$$
Following similar computations we also obtain
$$2\chi_{O_2} \geq \sum_{j \in J_{AC}} \Big[ I(C_j;T_j^e|R_j^e U) - I(A_j; R_j^e|T_j^e U)\Big] := L_{A\to C},$$
and
$$2\chi_{O_3} \geq \sum_{j \in J_{BC}} \Big[ I(C_j;S_j^e|R_j^e U)- I(B_j; R_j^e|S_j^e U) \Big] := L_{B\to C}.$$
Hence,
$$2\chi_O \geq L_{A\to B} + L_{A\to C} + L_{B\to C}.$$
If $ L_{A\to B} + L_{A\to C} + L_{B\to C} \geq 0$ the proof is complete. Otherwise, from the beginning we could change the order in which the chain rule in~\eqref{eq:chain-order} is expanded and repeat the same computations. If instead of the order Alice, Bob and Charlie in~\eqref{eq:chain-order} we expand $I(U; TSR)$ in the reverse order Charlie, Bob and Alice we obtain the inequality
$$2\chi_O \geq L_{B\to A} + L_{C\to A} + L_{C\to B}.$$
Now the proof completes once we note that $L_{B\to A} =-L_{A\to B}$ etc.

$\hfill \square$

\section{Tightness}
\label{Appendix_Tightness}

In this appendix we show that the Finner inequalities that we derive in the paper are tight in the following sense.

\begin{theorem}
For any network $\cal{N}$ with parties $\{A_1, \dots, A_n\}$ and sources $\{S_1, \dots, S_m\} $, and arbitrary numbers $0\leq p_j\leq 1$, there exists a fractional independent set $(\eta_1, \dots, \eta_n)$ of $\mathcal N$ and a binary distribution $P_{A_1\dots A_n}\in \NL$ such that 
$P_{A_j}(1) =p_j$ for all $j$ and 
$$P(1, \dots, 1) = \prod_{j=1}^n \left(P_{A_{j}}(1)\right)^{\eta_j}.$$
\end{theorem}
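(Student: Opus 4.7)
\medskip\noindent\emph{Proof plan.}\quad The plan is to construct, for any choice of marginals $(p_j)$, an explicit classical distribution in $\NL$ together with a fractional independent set $\boldsymbol\eta$ saturating~\eqref{NQFinnerProba}. The strategy I will use is a thresholded-source strategy (products of intervals on each source), whose parameters are pinned down by linear-programming duality.

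I set $y_j=-\log p_j\in[0,\infty)$, handling the degenerate cases $p_j\in\{0,1\}$ separately at the end, and introduce the pair of linear programs
\begin{equation*}
(\mathrm P)\ \min_{x\ge 0}\sum_i x_i\ \ \text{s.t.}\ \sum_{i:\,i\to j}x_i\ge y_j\ \forall j,\qquad (\mathrm D)\ \max_{\eta\ge 0}\sum_j \eta_j y_j\ \ \text{s.t.}\ \sum_{j:\,i\to j}\eta_j\le 1\ \forall i.
\end{equation*}
The feasible region of $(\mathrm D)$ is precisely the fractional independent set polytope of $\mathcal N$. Provided no isolated party has $p_j<1$ (otherwise one first augments $\mathcal N$ with a private source on each such party, an operation that does not change the set of fractional independent sets of the original network), both programs are feasible with finite optima and strong duality furnishes an optimal pair $(x^\star,\eta^\star)$ satisfying $\sum_i x_i^\star=\sum_j \eta_j^\star y_j$. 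Setting $\alpha_i^\star:=e^{-x_i^\star}\in(0,1]$, this identity reads $\prod_i \alpha_i^\star = \prod_j p_j^{\eta_j^\star}$, and I take $\boldsymbol\eta:=\eta^\star$.

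The strategy is: each source $S_i$ emits an independent uniform $s_i\in[0,1]$ and party $A_j$ outputs $1$ iff $s_i\le \alpha_i^{(j)}$ for every $i\to j$, with per-party thresholds chosen as follows. For $j$ with $\eta_j^\star>0$, complementary slackness gives $\sum_{i:\,i\to j}x_i^\star=y_j$, so the choice $\alpha_i^{(j)}=\alpha_i^\star$ already yields the correct marginal $P_{A_j}(1)=\prod_{i:\,i\to j}\alpha_i^\star=p_j$. For $j$ with $\eta_j^\star=0$ one has $\prod_{i:\,i\to j}\alpha_i^\star\le p_j\le 1$, so by continuity I can pick $\alpha_i^{(j)}\in[\alpha_i^\star,1]$ whose product equals $p_j$. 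The joint satisfies $P(1,\dots,1)=\prod_i\min_{j:\,i\to j}\alpha_i^{(j)}$; complementary slackness on the other side forces, for each source $i$ with $x_i^\star>0$, $\sum_{j:\,i\to j}\eta_j^\star=1$, so at least one endpoint has $\eta_j^\star>0$ and thus threshold exactly $\alpha_i^\star$, while the remaining endpoints have thresholds $\ge\alpha_i^\star$ by construction; for sources with $x_i^\star=0$ every threshold equals $1=\alpha_i^\star$. In every case the minimum is $\alpha_i^\star$, so $P(1,\dots,1)=\prod_i\alpha_i^\star=\prod_j p_j^{\eta_j^\star}$, which is the desired equality.

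The main obstacle is reconciling the saturation requirement --- which couples the per-source thresholds across all endpoints with positive weight --- with the need to match the prescribed marginals party by party. LP duality resolves this cleanly: complementary slackness on the dual side pins down the thresholds of the parties with $\eta_j^\star>0$, while complementary slackness on the primal side leaves exactly the amount of slack needed to raise the thresholds of the parties with $\eta_j^\star=0$ up to their required marginals without disturbing the minimum over endpoints (hence without changing the joint). The degenerate cases $p_j=0$ (take $A_j\equiv 0$ and $\eta_j=1$, which forces both sides of~\eqref{NQFinnerProba} to vanish), $p_j=1$ (take $\eta_j=0$), and isolated parties with $p_j\in(0,1)$ (handled by the private-source augmentation described above) are straightforward to check.
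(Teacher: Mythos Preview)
Your proof is correct and follows essentially the same approach as the paper: both set up the covering/packing LP pair over source weights and fractional independent sets, invoke strong duality and complementary slackness, and then build a threshold-based classical strategy (your uniform thresholds $s_i\le\alpha_i^{(j)}$ are exactly the paper's coupled Bernoulli variables $X_i$ and $Y_i^{(j)}$ in disguise). The only cosmetic difference is that the paper partitions parties according to whether the covering constraint $\sum_{i:i\to j}c_i^*\ge -\log p_j$ is tight (the sets $E$ and $F$), whereas you partition by whether $\eta_j^*>0$; complementary slackness makes your set a subset of the paper's $F$, and the resulting constructions coincide.
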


\begin{proof}
Let $(\eta_1^*, \dots, \eta_n^*)$ be a fractional independent set of $\mathcal N$ that optimizes the following linear program:
\begin{align*}
\max & \quad -\sum_{j=1}^n \eta_j\log p_j \\ 
\hbox{s.t.} & \quad \sum_{j: i\to j} \eta_j\le 1 \quad \forall i \qquad \\
& \quad \eta_j\ge 0 \quad\quad  \forall j.
\end{align*}
Consider the dual of this linear program:
\begin{align*}
\min & \quad \sum_{i=1}^m c_i \\ 
\hbox{s.t.} & \quad \sum_{i:i\to j} c_i\ge -\log p_j \quad \forall j\\
& \quad c_i\ge 0 \quad \forall i. 
\end{align*}
Let $(c^*_1, \dots, c^*_m)$ be an optimal solution of this dual program is $c^*$. 
Then by the strong duality of linear programs we have
\begin{align}\label{eq:SDLP}
\sum_i c_i^* = -\sum_j \eta_j^* \log p_j.
\end{align}
Let us split the set of parties in terms of the constraints of the dual linear program:
\begin{align*}
E&= \Big\{j\,\Big|\,  \sum_{i: i\to j} c_i^{*}>-\log p_j\Big\},\\
F&= \Big\{j\,\Big|\,  \sum_{i: i\to j} c_i^{*} = -\log p_j\Big\}.
\end{align*}
For any $j\in E$ and any source $i$ connected to it (i.e., with $i\to j$) pick some $0\leq d^{(j)}_i\leq c_i$ such that 
$$\sum_{i: i\to j} d^{(j)}_i = -\log p_j.$$
Note that since for $j\in E$ we have $ \sum_{i: i\to j} c_i^{*}>-\log p_j$, such $d^{(j)}_i$'s exist. 

Define independent Bernoulli random variables $(X_1, \dots, X_m)$ by 
$$P(X_i=1) = 2^{-c^*_i}.$$
Also for any $j\in E$ and source $i$ connected to it define the binary random variable $Y^{(j)}_{i}$ by
\begin{align*}
P\big(Y_{i}^{(j)}=1\big| X_{i}=1\big)=1,\\
P\big(Y_{i}^{(j)}=1\big)= 2^{-d_i^{(j)}}.
\end{align*}
Observe that such a random variable $Y_{i}^{(j)}$ exists since $d_i^{(j)}\leq c_i$, and that $Y_i^{(j)}$ can be computed given $X_i$ and independent of the rest of random variables. 

Now suppose that the $i$-th source distributes $X_i$. Then party $j\in F$ outputs
$$A_j = \prod_{i: i\to j} X_j,$$
and party $j\in E$ outputs
$$A_j = \prod_{i:i\to j} Y_i^{(j)}.$$
We emphasis once again that $Y_i^{(j)}$ can be computed locally by the $j$-th party having access to $X_i$.  Then by definition the resulting joint distribution $P_{A_1\dots A_n}$ belongs to $\NL$. Also, for every $j\in E$ we have
\begin{align*}
P(A_j=1) = \prod_{i:i\to j} P\big(Y_i^{(j)}=1\big) =\prod_{i:i\to j} 2^{-d_i^{(j)}}
= p_j. 
\end{align*} 
Similarly for every $j\in F$ we have $P(A_j=1)= p_j$. Next we compute $P(1, \dots, 1)$. Note that for every $i$ there exists some $j_i\in F$ with $i\to j$ since otherwise we can decrease $c_i^*$ and improve the objective value of the dual linear program. Then $(A_1, \dots, A_n) = (1, \dots, 1)$, and in particular $A_{j_i}=1$ only if $X_i=1$. On the other hand, by definitions if $X_i=1$ then  $Y_i^{(j)}=1$ for all $j\in E$. We conclude that $(A_1, \dots, A_n) = (1, \dots, 1)$ is equivalent to $X_i=1$ for all $i$. Therefore,
\begin{align*}
P(1, \dots, 1) & = \prod _i P(X_i=1) = \prod_i 2^{-c_i^*} = \prod_j p_j^{\eta_j^*},
\end{align*}    
where we used~\eqref{eq:SDLP}.

\end{proof}


\begin{thebibliography}{10}


\bibitem{bell} J.~S. Bell, {\em Physics} {\bfseries 1}, 195--200 (1964).

\bibitem{review} N.~Brunner, D.~Cavalcanti, S.~Pironio, V.~Scarani, and S.~Wehner, Rev. Mod. Phys. {\bf 86}, 419 (2014).

\bibitem{branciard} C. Branciard, N. Gisin and S. Pironio, Phys. Rev. Lett. {\bf 104}, 170401 (2010).

\bibitem{branciard2}
C. Branciard, D. Rosset, N. Gisin, and S. Pironio,
%Bilocal versus nonbilocal correlations in entanglement-swapping experiments,
Phys. Rev. A {\bf 85}, 032119 (2012).

\bibitem{fritz}
T. Fritz,
%Beyond Bell's theorem: correlation scenarios,
New J. Phys. {\bf 14}, 103001 (2012).


\bibitem{bennett} C. H. Bennett, G. Brassard, C. Crepeau, R. Jozsa, A. Peres and W. K. Wootters, Phys. Rev lett. {\bf 70}, 1895 (1993).
  
\bibitem{kimble} J. Kimble, Nature {\bf 453}, 1023 (2008).

\bibitem{simon} C. Simon, Nat. Phot. {\bf 11}, 678 (2017).


\bibitem{NG2018} N. Gisin, arXiv:1809.10901.

\bibitem{chaves2} R. Chaves, T. Fritz, Phys. Rev. A {\bf 85}, 032113 (2012).


\bibitem{armin}  A. Tavakoli, P. Skrzypczyk, D. Cavalcanti, and A. Ac\'in, Phys. Rev. A {\bf 90}, 062109 (2014).

\bibitem{rosset} D. Rosset, C. Branciard, T.~J. Barnea, G. P\"utz, N. Brunner, and N. Gisin, Phys. Rev. Lett. {\bf 116}, 010403 (2016).

\bibitem{chaves}
R. Chaves,
%Polynomial Bell Inequalities,
Phys. Rev. Lett. {\bf 116}, 010402 (2016).

\bibitem{wolfe}
E. Wolfe, R. W. Spekkens, and T. Fritz,
%The Inflation Technique for Causal Inference with Latent Variables
arXiv:1609.00672.

\bibitem{luo} M.-X. Luo, Phys. Rev. Lett. {\bf 120}, 140402 (2018).

\bibitem{gisin} N. Gisin, arXiv:1708.05556.

\bibitem{wolfe2} T. Fraser, E. Wolfe, Phys. Rev. A {\bf 98}, 022113 (2018).


\bibitem{finner} H. Finner, Ann. Probab. {\bf 20}, 1893-1901 (1992).



%\bibitem{friedgut} E. Friedgut, The American Mathematical Monthly, {\bf 111}, 9 (2004).

\bibitem{rosset2} D. Rosset, N. Gisin, E. Wolfe, arXiv:1709.00707 (2017).

\bibitem{navascues} M. Navascues, E. Wolfe, arXiv:1707.06476 (2017).

\bibitem{PR} S. Popescu and D. Rohrlich, Found. Phys. {\bf 24}, 379 (1994).

\bibitem{barrettPR} J. Barrett, N. Linden, S. Massar, S. Pironio, S. Popescu and D. Roberts, Phys. Rev. A {\bf 71} 022101 (2005).


\bibitem{barrett} J. Barrett, Phys. Rev. A {\bf 75}, 032304 (2007).

\bibitem{short1} A. J. Short, S. Popescu, N. Gisin, Phys. Rev. A {\bf 73}, 012101 (2006).

\bibitem{short_barrett} A. J. Short, J. Barrett, New J. Phys. {\bf 12}, 033034 (2010).

\bibitem{beigi} S. Beigi, A. Gohari, IEEE Transactions on Information Theory 61(9), 5185-5208 (2015).

%\bibitem{beigi2} S. Beigi, A. Gohari, IEEE Transactions on Information Theory 64(4), 2193-2211 (2018).

\bibitem{evans} R.J.Evans, Annals of Statistics, Vol. 46, No. 6A, 2623-2656 (2018)

\bibitem{Schrijver} A. Schrijver, Combinatorial Optimization: Polyhedra and Efficiency, Springer, 2003. 



\end{thebibliography}
\end{document}